\newcommand{\prl}{Phys. Rev. Lett.~}
\newcommand{\pra}{Phys. Rev. A~}
\definecolor{myurlcolor}{rgb}{0,0,0.7}% Color highlighted for numbers in citations and theorems
\newcommand{\blue}{\textcolor{blue}}
\newcommand{\tinyspace}{\mspace{1mu}}
\newcommand{\abs}[1]{\left\lvert\tinyspace #1 \tinyspace\right\rvert}
\renewcommand{\det}{\operatorname{det}}
\renewcommand{\t}{{\scriptscriptstyle\mathsf{T}}}
\newcommand{\setft}[1]{\mathrm{#1}}
\newcommand{\density}[1]{\setft{D}\left(#1\right)}
\def \dif {\mathrm{d}}
\def \diag {\mathrm{diag}}
\def\complex{\mathbb{C}}
\def\real{\mathbb{R}}
\newenvironment{mylist}[1]{\begin{list}{}{
    \setlength{\leftmargin}{#1}
    \setlength{\rightmargin}{0mm}
    \setlength{\labelsep}{2mm}
    \setlength{\labelwidth}{8mm}
    \setlength{\itemsep}{0mm}}}
    {\end{list}}
\def\ot{\otimes}
\newcommand{\out}[2]{| #1\rangle\langle #2 |}
\newcommand{\Inner}[2]{\left\langle #1 , #2\right\rangle}
\newcommand{\Pa}[1]{\left(#1\right)}
\newcommand{\Br}[1]{\left[#1\right]}
\newcommand{\set}[1]{\{#1\}}
\newcommand{\Set}[1]{\left\{#1\right\}}
\newcommand{\ket}[1]{|#1\rangle}
\DeclareMathOperator{\trace}{Tr}
\newcommand{\Ptr}[2]{\trace_{#1}\Pa{#2}}
\newcommand{\Tr}[1]{\Ptr{}{#1}}
\def\cE{\mathcal{E}}
\def\cO{\mathcal{O}}
\def\bD{\mathbf{D}}
\def\rF{\mathrm{F}}\def\rH{\mathrm{H}}
\def\rS{\mathrm{S}}
\def\rU{\mathrm{U}}
\def\L{\textsf{L}}
\newtheorem{thrm}{Theorem}[section]
\newtheorem{prop}[thrm]{Proposition}
\newtheorem{cor}[thrm]{Corollary}
\theoremstyle{definition}
\newtheorem{definition}[thrm]{Definition}
\numberwithin{equation}{section}
\newcounter{questionnumber}
\begin{document}

%==============================================================================================%
\title{\Large \bf Spectral density of mixtures of random density matrices for qubits}
%==============================================================================================%

\author{\blue{Lin Zhang}$^\text{a}$\footnote{E-mail: godyalin@163.com; linyz@hdu.edu.cn},\quad\blue{Jiamei Wang}$^\text{b}$,\quad\blue{Zhihua Chen}$^\text{c}$\\
  $^\text{a}${\it\small Institute of Mathematics, Hangzhou Dianzi University, Hangzhou 310018, PR~China}\\
  $^\text{b}${\it\small Department of Mathematics, Anhui University of
Technology, Ma Anshan 243032, PR~China}\\
  $^\text{c}${\it\small Department of Applied Mathematics, Zhejiang University of
Technology, Hangzhou, Zhejiang 310023, PR~China}}

\date{}
\maketitle

\mbox{}\hrule\mbox\\
\begin{abstract}

We derive the spectral density of the equiprobable mixture of two
random density matrices of a two-level quantum system. We also work
out the spectral density of mixture under the so-called quantum
addition rule. We use the spectral densities to calculate the
average entropy of mixtures of random density matrices, and  show
that the average entropy of the arithmetic-mean-state of $n$ qubit
density matrices randomly chosen from the Hilbert-Schmidt ensemble
is never decreasing with the number $n$. We also get the exact value
of the average squared fidelity. Some conjectures and open problems
related to von Neumann entropy are also
proposed.\\~\\
\textbf{Keywords.} Horn's problem; probability density function;
random quantum state; probabilistic mixture

\end{abstract}
\maketitle \mbox{}\hrule\mbox\\
\newpage
\tableofcontents

\newpage

%=====================================================%
\section{Introduction}
%=====================================================%

In the early 1950s, physicists had reached the limits of
deterministic analytical techniques for studying the energy spectra
of heavy atoms undergoing slow nuclear reactions. It is well-known
that a random matrix with appropriate symmetries might serve as a
suitable model for the Hamiltonian of the quantum mechanical system
that describes the reaction \cite{Wigner1958}. The eigenvalues of
this random matrix model the possible energy levels of the system
\cite{Mehta2004}. In quantum statistical mechanics, the canonical
states of the system under consideration are the reduced density
matrices of the uniform states on a subspace of system and
environment. Moreover, such reduced density matrices can be realized
by Wishart matrix ensemble \cite{Zyczkowski2011}. Thus
investigations by using random matrix theoretical techniques can
lead to deeper insightful perspectives on some problems in Quantum
Information Theory
\cite{Hayden2006,Hastings2009,Aubrun2012,Aubrun2014,Oszmaniec2016,Puchala2016,Nechita2016}.
In fact, most works using RMT as a tool to study quantum information
theory are concentrated on the limiting density and their
asymptotics. In stark contrast, researchers obtained an exact
probability distribution of eigenvalues of a multipartite random
quantum state via deep mathematical tools such as symplectic
geometric method albeit the used definition of Duistermaat-Heckman
measure is very abstract and difficult
\cite{Christandl2014,Walter2014}. Besides, the authors conducted
exact and asymptotic spectral analysis of the difference between two
random mixed quantum states \cite{Mejia2017}. Non-asymptotic results
about average quantum coherence for a random quantum state
\cite{Uttam2016,Lin2017a,Lin2017b} and its typicality were obtained
recently. Motivated by the connection of the works
\cite{Christandl2014,Walter2014} and Horn's problem
\cite{Zhang2017}, we focus the spectral analysis of mixture of
several random states in a two-level system. Although the spectral
analysis of superposition of random pure states were performed
recently \cite{ZPNC2011,CFF2013}, the topic about the spectral
densities for mixtures of random density matrices from two quantum
state ensembles is rarely discussed previously.

Along this line, we will make an attempt toward exact spectral
analysis of two kinds of mixtures of two random density matrices for
qubits: a) equiprobable mixture of two random density matrices,
based on the results obtained in Ref. \cite{Zhang2017}, and b)
mixture of two random density matrices under the quantum addition
rule (see Definition~\ref{def:additionrule}, \cite{Audenaert2016}).
To the best of our knowledge, such kind of spectral analysis for
mixture of random states is rarely conducted, in particular the
spectral density under the quantum addition rule. The aim of this
work is to analyze properties of a generic quantum state on
two-dimensional Hilbert space. For two random states chosen from two
unitary orbits, each distributed according to Haar measure over
$\rS\rU(2)$, we derive the spectral density of the equiprobable
mixture of both random density matrices for qubits, and the spectral
density of mixture of both random density matrices under the quantum
addition rule. When they are distributed according to the
Hilbert-Schmidt measure in the set $\density{\complex^2}$, i.e., the
set of all $2\times 2$ density matrices, of quantum states of
dimension two, we can calculate the average entropy of ensemble
generated by two kinds of mixtures. We also study entropy inequality
under the quantum addition rule.

The paper is organized as follows: In
Section~\ref{sect:preliminaries}, we recall some useful facts about
a qubit. Then we present our main results with their proofs in
Section~\ref{sect:mainresults}. Specifically, we obtain the spectral
densities of two kinds of mixtures of two qubit density matrices:
(a) the equiprobable mixture and (b) the mixture under the quantum
addition rule. By using the relationship between an eigenvalue of a
qubit density matrix and the length of its Bloch vector
representation, we get compact forms (Theorem~\ref{th:rr1r2} and
Theorem~\ref{th:qr-density}) of corresponding spectral densities. We
also investigate a quantum Jensen-Shannon divergence-like quantity,
based on the mixture of two random density matrices under the
quantum addition rule. It provide a universal lower bound for the
quantum Jensen-Shannon divergence. However our numerical experiments
show that such lower bound cannot define a true metric. Next, in
Section~\ref{sect:app}, we use the obtained results in the last
section to calculate the average entropies of mixtures of two random
density matrices in a two-level quantum system. We show that the
average entropy of the arithmetic-mean-state of $n$ qubit density
matrices being randomly chosen from the Hilbert-Schmidt ensemble is
never decreasing with $n$. As further illumination of our results,
we make an attempt to explain why 'mixing reduces coherence'. We
also work out the exact value of the average squared fidelity,
studied intensively by K. \.{Z}yczkowski. Finally, we conclude this
paper with some remarks and open problems.

%==========================================%
\section{Preliminaries}\label{sect:preliminaries}
%==========================================%

To begin with, we recall some facts about a qubit. Any qubit density
matrix can be represented as
\begin{eqnarray}\label{eq:Bloch-vector}
\rho(\boldsymbol{r}) = \frac12(\mathbb{I}_2+\boldsymbol{r}\cdot
\boldsymbol{\sigma}),
\end{eqnarray}
where $\boldsymbol{r}=(r_x,r_y,r_z)\in\real^3$ is the Bloch vector
with $r:=\abs{\boldsymbol{r}}\leqslant 1$, and
$\boldsymbol{\sigma}=(\sigma_x,\sigma_y,\sigma_z)$. Here
\begin{eqnarray*}
\sigma_x=\Pa{\begin{array}{cc}
               0 & 1 \\
               1 & 0
             \end{array}
}, \sigma_y=\Pa{\begin{array}{cc}
                  0 & -\mathrm{i} \\
                  \mathrm{i} & 0
                \end{array}
}, \sigma_z=\Pa{\begin{array}{cc}
                  1 & 0 \\
                  0 & -1
                \end{array}
}
\end{eqnarray*}
are three Pauli matrices. The eigenvalues of a qubit density matrix
are given by: $\lambda_\pm(\rho)=\frac12\Pa{1\pm r}$, where
$r\in[0,1]$. This leads to the von Neumann entropy of the qubit
$\rho(\boldsymbol{r})$ of Bloch vector $\boldsymbol{r}$:
\begin{eqnarray}
\rS(\rho(\boldsymbol{r}))= \rH_2\Pa{\frac{1-r}2}:=\Phi(r),
\end{eqnarray}
where $\rH_2\Pa{p}= -p\log_2p -(1-p)\log_2(1-p)$ for $p\in[0,1]$ is
the binary entropy function.

Note that the maximal eigenvalue $\lambda_+(\rho)$ for a random
qubit density matrix, induced from taking partial trace over a
Haar-distributed pure two-qubit state, is subject to the following
distribution \cite{Christandl2014}:
\begin{eqnarray}
\dif P(x) = 24\Pa{x-\frac12}^2\dif x,
\end{eqnarray}
where $x:=\lambda_+(\rho)\in[1/2,1]$. Since
$\lambda_+(\rho)=\frac12(1+r)$, it follows that the probability
density of the length $r$ of Bloch vector of a random qubit
$\rho(\boldsymbol{r})$ is summarized into the following proposition.
\begin{prop}\label{prop:original}
The probability density for the length $r$ of the Bloch vector
$\boldsymbol{r}$ in the Bloch representation \eqref{eq:Bloch-vector}
of a random qubit $\rho$ by partial-tracing other qubit system over
a Haar-distributed pure two-qubit state, is given by
\begin{eqnarray}
p_r(r)=3r^2,\quad r\in[0,1].
\end{eqnarray}
Furthermore, the probability distribution of Bloch vector
$\boldsymbol{r}$ is given by the formula: $p(\boldsymbol{r})[\dif
\boldsymbol{r}]=3r^2\dif r\times
\frac1{4\pi}\delta(1-\abs{\boldsymbol{u}})[\dif \boldsymbol{u}]$,
where $\delta$ is the Dirac delta function and $[\dif
\boldsymbol{u}]$ is the Lebesgue volume element in $\real^3$.
\end{prop}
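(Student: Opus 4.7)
The plan is to derive the density $p_r(r)=3r^2$ by a direct change of variables from the stated eigenvalue distribution, and then upgrade this radial density to a full Bloch-vector distribution using unitary invariance.

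First I would use the relation $\lambda_+(\rho)=\frac{1+r}{2}$ as a change of variables $x\mapsto r$. With $x-\frac{1}{2}=\frac{r}{2}$ and $\dif x=\frac{1}{2}\dif r$, substituting into $\dif P(x)=24(x-\frac{1}{2})^2\dif x$ immediately gives
\begin{eqnarray*}
\dif P = 24\cdot\frac{r^2}{4}\cdot\frac{\dif r}{2}=3r^2\,\dif r,\qquad r\in[0,1],
\end{eqnarray*}
which is the claimed radial density $p_r(r)=3r^2$. One should also check the normalization $\int_0^1 3r^2\,\dif r=1$ as a sanity test, and confirm that the map $x\mapsto r$ is a bijection between $[1/2,1]$ and $[0,1]$ so no Jacobian factor or absolute value is lost.

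Second, for the distribution of $\boldsymbol{r}$ itself, I would invoke the unitary invariance of the measure: since $\rho$ is obtained by partial-tracing a Haar-distributed pure two-qubit state, $\rho$ and $U\rho U^\dagger$ have the same distribution for every $U\in\rS\rU(2)$. Under the Bloch correspondence \eqref{eq:Bloch-vector}, conjugation by $\rS\rU(2)$ acts as the adjoint $\rS\rO(3)$-action on $\boldsymbol{r}$. Hence the law of $\boldsymbol{r}$ is rotationally invariant on $\real^3$, which forces the direction $\boldsymbol{u}=\boldsymbol{r}/r$ to be uniformly distributed on the unit sphere $S^2$ and, moreover, independent of its length $r$. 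Writing the uniform probability measure on $S^2$ as $\frac{1}{4\pi}\delta(1-\abs{\boldsymbol{u}})[\dif\boldsymbol{u}]$ in ambient $\real^3$ coordinates, and combining with the radial density just computed, yields the stated product formula.

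I do not expect any serious obstacle: both steps are essentially routine. The only point requiring a touch of care is justifying the independence of length and direction — this is standard for rotationally invariant measures on $\real^3$, and follows from the fact that the joint law is the pushforward of a product measure under polar coordinates. An alternative, perhaps cleaner, route would be to verify the full formula first (by writing the Haar-induced measure on $\density{\complex^2}$ in spectral coordinates via the Weyl integration formula, noting that the eigenbasis of a qubit is parametrized by $\rS\rU(2)/\rU(1)\cong S^2$) and then recover the marginal $p_r(r)=3r^2$ by integrating out $\boldsymbol{u}$; this would make the factorized structure manifest from the start.
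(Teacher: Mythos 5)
Your proposal is correct and follows essentially the same route as the paper: the paper obtains $p_r(r)=3r^2$ precisely by the change of variables $x=\lambda_+(\rho)=\tfrac{1+r}{2}$ applied to the eigenvalue density $24\left(x-\tfrac12\right)^2$ from Christandl et al., and the product form for $p(\boldsymbol{r})$ rests on the same unitary-invariance (hence rotational-invariance) argument you give. Your sanity checks (normalization and bijectivity of the variable change) and the remark on independence of length and direction are fine but add nothing beyond the paper's implicit reasoning.
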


%==========================================%
\section{Main results}\label{sect:mainresults}
%==========================================%

%---------------------------------------------------%
\subsection{The spectral density of equiprobable mixture of two qubit states}
%---------------------------------------------------%

For $w\in[0,1]$, we have the probabilistic mixture of two density
matrices in a two-level system:
$\rho_w(\boldsymbol{r})=w\rho(\boldsymbol{r}_1)+(1-w)\rho(\boldsymbol{r}_2)$.
In particular, for $w=1/2$, we have the equiprobable mixture, that
is,
$\rho(\boldsymbol{r})=\frac{\rho(\boldsymbol{r}_1)+\rho(\boldsymbol{r}_2)}2$,
hence $\boldsymbol{r}=\frac{\boldsymbol{r}_1+\boldsymbol{r}_2}2$. As
in \cite{Zhang2017}, denote $\mu,\nu\in[0,1/2]$ the minimal
eigenvalues of two qubit states $\rho(\boldsymbol{r}_1)$ and
$\rho(\boldsymbol{r}_2)$, respectively. Then we have
$\mu=\frac{1-r_1}2$ and $\nu=\frac{1-r_2}2$ for $r_1,r_2\in[0,1]$.
Denote $\cO_\mu:=\Set{U\diag(1-\mu,\mu)U^\dagger: U\in\rS\rU(2)}$.
We consider the equiprobable mixture of two random density matrices
$\rho(\boldsymbol{r}_1)\in\cO_\mu$ and $\rho(\boldsymbol{r}_2)\in
\cO_\nu$. In \cite{Zhang2017}, we have already derived the
analytical formula for the spectral density of such equiprobable
mixture. This result can be summarized into the following
proposition.
\begin{prop}[\cite{Zhang2017}]\label{prop:lmn}
The probability density function of an eigenvalue $\lambda$ of the
equiprobable mixture of two random density matrices, chosen
uniformly from respective unitary orbits $\cO_\mu$ and $\cO_\nu$
with $\mu,\nu$ are fixed in $(0,1/2)$, is given by
\begin{eqnarray}\label{eq:lambdamunu}
p(\lambda|\mu,\nu) =
\frac{\abs{\lambda-\frac12}}{\Pa{\frac12-\mu}\Pa{\frac12-\nu}},
\end{eqnarray}
where $\lambda \in[T_0,T_1]\cup[1-T_1,1-T_0]$. Here
$T_0:=\frac{\mu+\nu}2$ and $T_1:=\frac{1-\abs{\mu-\nu}}2$.
\end{prop}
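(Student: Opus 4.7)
The plan is to reduce the eigenvalue problem to a purely geometric one about the length of the sum of two Bloch vectors drawn uniformly from spheres of fixed radii in $\real^3$.

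First, I would translate all the randomness into the Bloch representation recalled in Section \ref{sect:preliminaries}. Because $\rho(\boldsymbol{r}_i)$ lies on the unitary orbit with minimal eigenvalue $\mu_i\in\{\mu,\nu\}$, its Bloch vector has a fixed length $r_i=1-2\mu_i$, and the Haar measure on $\rS\rU(2)$, through the adjoint action on the Pauli triple $\boldsymbol{\sigma}$, pushes forward to the uniform probability measure on the sphere of radius $r_i$. The equiprobable mixture carries Bloch vector $\boldsymbol{r}=(\boldsymbol{r}_1+\boldsymbol{r}_2)/2$ and eigenvalues $\lambda_\pm=\tfrac12(1\pm s/2)$, where $s:=\abs{\boldsymbol{r}_1+\boldsymbol{r}_2}$. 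The whole task therefore collapses onto finding the density of $s$.

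Second, I would fix $\boldsymbol{r}_1$ along the $z$-axis by rotational invariance, so only the polar angle $\theta$ of $\boldsymbol{r}_2$ contributes, with density $\tfrac12\sin\theta$ on $[0,\pi]$. The law of cosines $s^2=r_1^2+r_2^2+2r_1r_2\cos\theta$ combined with the differential relation $s\,\dif s=-r_1r_2\sin\theta\,\dif\theta$ then yields
\[
p_s(s)=\frac{s}{2r_1r_2},\qquad s\in\bigl[\,\abs{r_1-r_2},\,r_1+r_2\,\bigr].
\]
Substituting $r_i=1-2\mu_i$ translates the endpoints into $s/2\in\bigl[\abs{\mu-\nu},\,1-(\mu+\nu)\bigr]$, and the affine map $\lambda_\pm=\tfrac12(1\pm s/2)$ lands precisely on $[T_0,T_1]\cup[1-T_1,1-T_0]$.

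Third, to obtain the single-eigenvalue density matching the statement I would place mass $\tfrac12$ on each of the two branches $\lambda=\lambda_\pm$, apply the Jacobian $\abs{\dif\lambda/\dif s}=\tfrac14$, and simplify using $\abs{\lambda-\tfrac12}=s/4$ together with $r_1r_2=4(\tfrac12-\mu)(\tfrac12-\nu)$. These ingredients combine directly to
\[
p(\lambda\mid\mu,\nu)=\frac{\abs{\lambda-\tfrac12}}{(\tfrac12-\mu)(\tfrac12-\nu)}
\]
on the prescribed support. The only real obstacle is bookkeeping rather than analysis: one must settle the normalization convention separating the ``uniformly chosen single eigenvalue'' ($p$ integrating to $1$) from the raw spectral measure (integrating to $2$), and verify that the two disjoint arcs of support glue together correctly under the $\lambda\mapsto 1-\lambda$ symmetry. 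Both are confirmed by computing $\int p(\lambda\mid\mu,\nu)\,\dif\lambda$ over $[T_0,T_1]\cup[1-T_1,1-T_0]$ explicitly, which reduces to the elementary identity $(T_1-T_0)\bigl(1-(T_0+T_1)\bigr)=(\tfrac12-\mu)(\tfrac12-\nu)$ and its mirror image.
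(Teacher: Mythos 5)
Your proposal is correct, and it is a genuinely different route from the paper's: the paper does not prove Proposition~\ref{prop:lmn} at all, but imports it from \cite{Zhang2017} (where it arises from the Horn's-problem machinery for sums of coadjoint orbits) and then \emph{deduces} the Bloch-vector statement of Theorem~\ref{th:rr1r2}, and subsequently the angle density $f(\theta)=\tfrac12\sin\theta$, as consequences. You run the logic in the opposite, more elementary direction: the push-forward of Haar measure on $\rS\rU(2)$ under the adjoint action gives the uniform measure on the sphere of radius $r_i=1-2\mu_i$, the angle density $\tfrac12\sin\theta$ is then immediate, and the law of cosines plus the one-dimensional change of variables $s\,\dif s=-r_1r_2\sin\theta\,\dif\theta$ yields $p_s(s)=s/(2r_1r_2)$, from which both Theorem~\ref{th:rr1r2} and Proposition~\ref{prop:lmn} follow by affine substitutions. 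All the numerical details check out: the branch weights of $\tfrac12$, the Jacobian $\abs{\dif\lambda/\dif s}=\tfrac14$, the identification $r_1r_2=4(\tfrac12-\mu)(\tfrac12-\nu)$, the support $[T_0,T_1]\cup[1-T_1,1-T_0]$, and the normalization identity $(T_1-T_0)\bigl(1-(T_0+T_1)\bigr)=(\tfrac12-\mu)(\tfrac12-\nu)$ (verified directly by splitting on the sign of $\mu-\nu$). What your approach buys is a self-contained first-principles proof that makes the paper independent of the external citation, and it is exactly the technique the paper itself later deploys for the quantum-addition-rule density in Theorem~\ref{th:qr-density}, so it unifies the two derivations; what it gives up is the generality of the orbit-theoretic method of \cite{Zhang2017}, which is not tied to the accident that $\su(2)\cong\real^3$ carries the spectral data in a single radial coordinate. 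Incidentally, your relation $r_i=1-2\mu_i$ is the correct one (the expressions $r_1=\frac{1-\mu}{2}$, $r_2=\frac{1-\nu}{2}$ appearing in the later propositions of Section~\ref{sect:app} look like typos for this).
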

Note that $\lambda \in[T_0,T_1]\cup[1-T_1,1-T_0]$ indicates that the
domain of an eigenvalue of the equiprobable mixture:
$\frac12\Pa{U\diag(1-\mu,\mu)U^\dagger+V\diag(1-\nu,\nu)V^\dagger}$,
where $U,V\in\rS\rU(2)$.

Given two random density matrices $\rho(\boldsymbol{r}_1)\in
\cO_\mu$ and $\rho(\boldsymbol{r}_2)\in\cO_\nu$. We also see that
the eigenvalues of the mixture $\rho(\boldsymbol{r})$ are given by
$\lambda=\frac{1\pm r}2$ . The sign $\pm$ depends on the
relationship between $\lambda$ and $1/2$. Indeed,
$\lambda=\frac{1+r}2$ if $\lambda\geqslant 1/2$;
$\lambda=\frac{1-r}2$ if $\lambda\leqslant 1/2$. By using the triple
$(r_1,r_2,r)$ instead of $(\mu,\nu,\lambda)$, we have the following
result.
\begin{thrm}\label{th:rr1r2}
The conditional probability density function of the length $r$ of
the Bloch vector $\boldsymbol{r}$ of the equiprobable mixture:
$\rho(\boldsymbol{r})=\frac{\rho(\boldsymbol{r}_1)+\rho(\boldsymbol{r}_2)}2$,
where $r_1,r_2\in(0,1)$ are fixed, is given by
\begin{eqnarray}
p(r|r_1,r_2) = \frac{2r}{r_1r_2},
\end{eqnarray}
where $r\in[r_-,r_+]$ with $r_-:=\frac{\abs{r_1-r_2}}2$ and
$r_+:=\frac{r_1+r_2}2$.
\end{thrm}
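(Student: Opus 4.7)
The plan is to derive Theorem~\ref{th:rr1r2} directly from Proposition~\ref{prop:lmn} via a change of variables, using the dictionary
\begin{eqnarray*}
\mu=\frac{1-r_1}{2},\qquad \nu=\frac{1-r_2}{2},\qquad \lambda=\frac{1\pm r}{2}
\end{eqnarray*}
between the minimal eigenvalues of $\rho(\boldsymbol r_1),\rho(\boldsymbol r_2)$ (resp.\ an eigenvalue of the mixture) and the lengths $r_1,r_2,r$ of the Bloch vectors. Under this dictionary the elementary identities $1/2-\mu=r_1/2$, $1/2-\nu=r_2/2$, and $|\lambda-1/2|=r/2$ hold, so the weight in \eqref{eq:lambdamunu} takes the compact form $\frac{r/2}{(r_1/2)(r_2/2)}=\frac{2r}{r_1r_2}$. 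The only genuine work is to keep track of the two-to-one nature of the map $\lambda\mapsto r$ and to translate the support.

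The concrete steps I would carry out are the following. First, I would show that the interval $[T_0,T_1]$ below $1/2$ corresponds bijectively (via $\lambda=(1-r)/2$) to the interval $[r_-,r_+]$, where $r_-=|r_1-r_2|/2$ and $r_+=(r_1+r_2)/2$; this is a direct substitution using $T_0=(\mu+\nu)/2$ and $T_1=(1-|\mu-\nu|)/2$. By the reflection $\lambda\leftrightarrow 1-\lambda$, the interval $[1-T_1,1-T_0]$ above $1/2$ corresponds (via $\lambda=(1+r)/2$) to the same interval $[r_-,r_+]$. Second, on each branch the Jacobian factor is $|\dif\lambda/\dif r|=1/2$, so pushing $p(\lambda\mid\mu,\nu)\,\dif\lambda$ forward along $\lambda=(1\mp r)/2$ gives, on each branch, a density $\frac{r}{r_1r_2}$ with respect to $\dif r$. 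Third, since the two values of $\lambda$ associated with a given mixture state (its two eigenvalues) both land on the same $r\in[r_-,r_+]$, summing the two branch contributions yields $p(r\mid r_1,r_2)=\frac{2r}{r_1r_2}$.

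Finally, as a consistency check I would verify normalisation by computing
\begin{eqnarray*}
\int_{r_-}^{r_+}\frac{2r}{r_1r_2}\,\dif r=\frac{r_+^2-r_-^2}{r_1r_2}=\frac{\tfrac14\br{(r_1+r_2)^2-(r_1-r_2)^2}}{r_1r_2}=1,
\end{eqnarray*}
which matches the fact that $p(\lambda\mid\mu,\nu)$ integrates to $1$ over the union $[T_0,T_1]\cup[1-T_1,1-T_0]$, with each interval contributing $1/2$ (and this $1/2$ is precisely what cancels the Jacobian factor and accounts for the factor of $2$ when the two branches are added).

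There is no real obstacle here; the statement is essentially a rewriting of Proposition~\ref{prop:lmn} in Bloch-length variables, and the only point that requires a moment of care is the branch-counting: the map $\lambda\mapsto r$ identifies the pair $\{\lambda,1-\lambda\}$ with a single $r$, so the two intervals in the support of $\lambda$ collapse to one interval in the support of $r$, doubling the density and producing the clean expression $2r/(r_1r_2)$.
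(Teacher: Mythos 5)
Your proposal is correct and matches the paper's (implicit) argument: the paper states Theorem~\ref{th:rr1r2} as an immediate consequence of Proposition~\ref{prop:lmn} ``by using the triple $(r_1,r_2,r)$ instead of $(\mu,\nu,\lambda)$,'' which is precisely the change of variables, branch-counting, and Jacobian bookkeeping you carry out. Your version is in fact more explicit than the paper's, and the normalisation check correctly confirms the factor of $2$ from the two-to-one map $\lambda\mapsto r$.
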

Denote by $\theta$ the angle between Bloch vectors
$\boldsymbol{r}_1$ and $\boldsymbol{r}_2$ from two random density
matrices $\rho(\boldsymbol{r}_1)\in \cO_\mu$ and
$\rho(\boldsymbol{r}_2)\in\cO_\nu$, respectively. Apparently
$\theta\in[0,\pi]$. Since
$\rho(\boldsymbol{r})=\frac{\rho(\boldsymbol{r}_1)+\rho(\boldsymbol{r}_2)}2$,
i.e., $\boldsymbol{r}=\frac{\boldsymbol{r}_1+\boldsymbol{r}_2}2$, it
follows that $r=\frac12\sqrt{r^2_1+r^2_2+2r_1r_2\cos\theta}$, where
$\theta\in[0,\pi]$.

Clearly the rhs is the invertible function of the argument $\theta$
defined over $[0,\pi]$ when $r_1$ and $r_2$ are fixed. In view of
this, we see that the angle between two random Bloch vectors has the
following probability density:
\begin{eqnarray}\label{eq:theta-density}
f(\theta)=\frac12\sin\theta,\quad \theta\in[0,\pi].
\end{eqnarray}

%---------------------------------------------------%
\subsection{The quantum addition rule for two qubit states}
%---------------------------------------------------%

Shannon's Entropy Power Inequality mainly deals with the concavity
of an entropy function of a continuous random variable under the
scaled addition rule: $f\Pa{\sqrt{w}X+\sqrt{1-w}Y}\geqslant
wf(X)+(1-w)f(Y)$, where $w\in[0,1]$ and $X,Y$ are continuous random
variables and the function $f$ is either the differential entropy or
the entropy power \cite{Konig2014}. Some generalizations in the
quantum regime along this line are obtained recently. For instance,
quantum analogues of these inequalities for continuous-variable
quantum systems are obtained, where $X$ and $Y$ are replaced by
bosonic fields and the addition rule is the action of a beam
splitter with transmissivity $w$ on those fields
\cite{Konig2014,Palma2014}. Similarly, Audenaert \emph{et al}
establish a class of entropy power inequality analogs for qudits.
The addition rule used in these inequalities is given by the
so-called \emph{partial swap channel} \cite{Audenaert2016}. Let us
recall some notions we will use in this paper.

Let $\set{\ket{j}:j=1,\ldots,d}$ be the standard basis of
$\complex^d$. Then $\set{\ket{ij}:i,j=1,\ldots,d}$ is an orthonormal
basis of $\complex^d\ot\complex^d$. Denote by $\density{\complex^d}$
the set of all density matrices on $\complex^d$. The swap operator
$S\in\rU(\complex^d\ot\complex^d)$, the unitary group on
$\complex^d\ot\complex^d$, is defined through its action on the
basis vectors $\ket{ij}$ as follows: $S\ket{ij}=\ket{ji}$ for all
$i,j=1,\ldots,d$. Explicitly, the swap operator can be rewritten as
$S=\sum^{d}_{i,j=1}\out{ij}{ji}$. From the definition of the swap
operator, we see that $S$ is self-adjoint and unitary. Audenaert
\emph{et al} defined a qudit partial swap operator as a unitary
interpolation between the identity and the swap operator in
\cite{Audenaert2016}.
\begin{definition}[Partial swap operator]
For $t\in[0,1]$, the partial swap operator
$U_t\in\rU(\complex^d\ot\complex^d)$ is the unitary operator $U_t :=
\sqrt{t}\mathbb{I}_d\ot\mathbb{I}_d + \sqrt{1-t}\mathrm{i}S$.
\end{definition}
It is easily seen that the matrix representation of the partial swap
operator for the two-level system is
\begin{eqnarray*}
U_t &=&\Br{\begin{array}{cccc}
            \sqrt{t}+\mathrm{i}\sqrt{1-t} & 0 & 0 & 0 \\
            0 & \sqrt{t} & \mathrm{i}\sqrt{1-t} & 0 \\
            0 & \mathrm{i}\sqrt{1-t} & \sqrt{t} & 0 \\
            0 & 0 & 0 & \sqrt{t}+\mathrm{i}\sqrt{1-t}
          \end{array}
}.
\end{eqnarray*}
When $t=1/2$, we have
\begin{eqnarray*}
U_{1/2} &=& \frac1{\sqrt{2}}\Br{\begin{array}{cccc}
            1+\mathrm{i} & 0 & 0 & 0 \\
            0 & 1 & \mathrm{i} & 0 \\
            0 & \mathrm{i} & 1 & 0 \\
            0 & 0 & 0 & 1+\mathrm{i}
          \end{array}
},
\end{eqnarray*}
Consider a family of CPTP maps
$\cE_t:\density{\complex^d\ot\complex^d}\to\density{\complex^d}$
parameterized by $t\in[0,1]$. It is defined in terms of the partial
swap operator $U_t$ in the above definition. For any
$\rho_{12}\in\density{\complex^d\ot\complex^d}$, let
$\cE_t(\rho_{12}) = \Ptr{2}{U_t\rho_{12}U^\dagger_t}$. Denote
$\widehat \cE_t(\rho_{12}) = \Ptr{1}{U_t\rho_{12}U^\dagger_t}$. We
are particularly interested in the case where the input state
$\rho_{12}$ is a product state, i.e. $\rho_{12}=\rho_1\ot\rho_2$ for
$\rho_j\in\density{\complex^d}, j=1,2$. Apparently,
$S(\rho_1\ot\rho_2)S=\rho_2\ot\rho_1$. Now
\begin{eqnarray}
U_t(\rho_1\ot\rho_2)U^\dagger_t = t\rho_1\ot\rho_2 +
(1-t)\rho_2\ot\rho_1+\mathrm{i}\sqrt{t(1-t)}[S,\rho_1\ot\rho_2].
\end{eqnarray}
From this, we see easily that
\begin{eqnarray}
\cE_t(\rho_1\ot\rho_2) &=& t\rho_1+(1-t)\rho_2
-\mathrm{i}\sqrt{t(1-t)}[\rho_1,\rho_2],\\
\widehat\cE_t(\rho_1\ot\rho_2) &=& t\rho_2+(1-t)\rho_1
+\mathrm{i}\sqrt{t(1-t)}[\rho_1,\rho_2].
\end{eqnarray}
In particular, for $t=\frac12$, we get
\begin{eqnarray}
\cE_{1/2}(\rho_1\ot\rho_2) &=& \frac12\Pa{\rho_1+\rho_2
-\mathrm{i}[\rho_1,\rho_2]},\\
\widehat\cE_{1/2}(\rho_1\ot\rho_2) &=& \frac12\Pa{\rho_1+\rho_2
+\mathrm{i}[\rho_1,\rho_2]}.
\end{eqnarray}

\begin{definition}[Quantum addition rule]\label{def:additionrule}
For any $t\in[0,1]$ and any $\rho_1,\rho_2\in\density{\complex^d}$,
we define $\rho_1\boxplus_t \rho_2 := \cE_t(\rho_1\ot\rho_2)$. It is
trivial that $\rho_2\boxplus_t\rho_1=\widehat
\cE_t(\rho_1\ot\rho_2)$.
\end{definition}
Denote $g_d(t):=\rS(\rho_1\boxplus_t \rho_2)+\rS(\rho_2\boxplus_t
\rho_1) - \rS(\rho_1)-\rS(\rho_2)$, where $\rS(\rho)$ is the von
Neumann entropy of $\rho$. This can be viewed as the mutual
information between two $d$-level subsystems after performing
$\cE_t$ when their composite system lives in the product form. In
other words, such quantity $g_d(t)$ stands for the correlative power
of the partial swap channel $\cE_t$, we conjecture
$\max_{t\in[0,1]}g_d(t)=g_d(1/2)$ for any $d\geqslant 2$, that is,
the correlative power of the partial swap channel achieves its
maximum at $t=1/2$. We next give a positive answer to this
conjecture in the qubit case. The proof for the qudit case is
expected.
\begin{prop}\label{prop:CP-PSC}
For any $t\in[0,1]$ and any two qubit density matrices
$\rho_1,\rho_2\in\density{\complex^2}$, we have
\begin{eqnarray}
\max_{t\in[0,1]}g_2(t)=g_2(1/2).
\end{eqnarray}
That is,
\begin{eqnarray}
\max_{t\in[0,1]} \Pa{\rS(\rho_1\boxplus_t
\rho_2)+\rS(\rho_2\boxplus_t \rho_1)}=2\rS(\rho_1\boxplus_{1/2}
\rho_2).
\end{eqnarray}
\end{prop}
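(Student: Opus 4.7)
The plan is to pass to Bloch vectors and reduce everything to a one-variable concavity inequality. Writing $\rho_j=\frac12(\mathbb{I}_2+\boldsymbol{r}_j\cdot\boldsymbol{\sigma})$, the Pauli identity $(\boldsymbol{a}\cdot\boldsymbol{\sigma})(\boldsymbol{b}\cdot\boldsymbol{\sigma})=(\boldsymbol{a}\cdot\boldsymbol{b})\mathbb{I}_2+\mathrm{i}(\boldsymbol{a}\times\boldsymbol{b})\cdot\boldsymbol{\sigma}$ yields $-\mathrm{i}[\rho_1,\rho_2]=\frac12(\boldsymbol{r}_1\times\boldsymbol{r}_2)\cdot\boldsymbol{\sigma}$. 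Substituting into the formulas for $\cE_t$ and $\widehat\cE_t$ displayed in the excerpt gives the Bloch vectors
$$\boldsymbol{a}_t=t\boldsymbol{r}_1+(1-t)\boldsymbol{r}_2+\sqrt{t(1-t)}(\boldsymbol{r}_1\times\boldsymbol{r}_2),\quad \boldsymbol{b}_t=(1-t)\boldsymbol{r}_1+t\boldsymbol{r}_2-\sqrt{t(1-t)}(\boldsymbol{r}_1\times\boldsymbol{r}_2)$$
of $\rho_1\boxplus_t\rho_2$ and $\rho_2\boxplus_t\rho_1$, respectively. Since $(\boldsymbol{r}_1\times\boldsymbol{r}_2)\perp\boldsymbol{r}_j$, the mixed terms involving the cross product drop out of $\abs{\boldsymbol{a}_t}^2+\abs{\boldsymbol{b}_t}^2$, and a short computation produces the clean identity
$$\abs{\boldsymbol{a}_t}^2+\abs{\boldsymbol{b}_t}^2=2a_*^2+2\Pa{t-\tfrac12}^2\beta,$$
where $a_*^2:=\abs{\boldsymbol{a}_{1/2}}^2=\abs{\boldsymbol{b}_{1/2}}^2=\tfrac14\Pa{\abs{\boldsymbol{r}_1+\boldsymbol{r}_2}^2+\abs{\boldsymbol{r}_1\times\boldsymbol{r}_2}^2}$ and $\beta:=\abs{\boldsymbol{r}_1-\boldsymbol{r}_2}^2-\abs{\boldsymbol{r}_1\times\boldsymbol{r}_2}^2$.

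Next I check $\beta\geq 0$ for all admissible parameters. Regarding $\beta=r_1^2+r_2^2-2r_1r_2\cos\theta-r_1^2r_2^2\sin^2\theta$ as a quadratic in $c=\cos\theta\in[-1,1]$ with positive leading coefficient $r_1^2r_2^2$, its vertex $c=1/(r_1r_2)\geq 1$ lies outside $[-1,1]$, so the quadratic is monotonically decreasing on $[-1,1]$ and attains its minimum $(r_1-r_2)^2\geq 0$ at $c=1$. Consequently $\abs{\boldsymbol{a}_t}^2+\abs{\boldsymbol{b}_t}^2$ is minimized precisely at $t=1/2$.

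To convert this into an entropy inequality, introduce $g(u):=\Phi(\sqrt{u})$ on $[0,1]$, so that $\rS(\rho_1\boxplus_t\rho_2)=g(\abs{\boldsymbol{a}_t}^2)$ and $\rS(\rho_2\boxplus_t\rho_1)=g(\abs{\boldsymbol{b}_t}^2)$. Monotonicity of $g$ is immediate from $\Phi'<0$, while the concavity of $g$ is the only serious technical step: after differentiating twice and clearing positive factors, $g''(u)\leq 0$ reduces to the elementary inequality $\ln\tfrac{1+r}{1-r}\geq \tfrac{2r}{1-r^2}$ on $[0,1)$, which I verify by observing that $h(r):=\ln\tfrac{1-r}{1+r}+\tfrac{2r}{1-r^2}$ satisfies $h(0)=0$ together with $h'(r)=4r^2/(1-r^2)^2\geq 0$. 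Combining Jensen's inequality with monotonicity of $g$ and the identity above then gives
$$\rS(\rho_1\boxplus_t\rho_2)+\rS(\rho_2\boxplus_t\rho_1)\leq 2g\Pa{\tfrac{\abs{\boldsymbol{a}_t}^2+\abs{\boldsymbol{b}_t}^2}{2}}\leq 2g(a_*^2)=2\rS(\rho_1\boxplus_{1/2}\rho_2),$$
with equality at $t=1/2$. The main obstacle is precisely the concavity of $g$ in the variable $u=r^2$: the familiar concavity of $\Phi$ in $r$ is insufficient here, because the cross-product cancellation naturally produces a lower bound on the sum of \emph{squared} Bloch norms rather than on the sum of norms, forcing one to reparameterize the entropy through $u=r^2$.
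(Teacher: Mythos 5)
Your argument is correct, and it takes a genuinely different route from the paper's. The paper works directly with $s(t)=\rS(\rho_1\boxplus_t\rho_2)+\rS(\rho_2\boxplus_t\rho_1)=\Phi(r_{12}(t))+\Phi(r_{21}(t))$, computes $s''(t)$ explicitly, shows $s''(t)<0$ by proving that the Bloch lengths $r_{12}(t),r_{21}(t)$ are convex in $t$ (via the inequality $4-(2\cos\theta+r_1r_2\sin^2\theta)^2\geqslant 0$) while $\Phi'\leqslant 0$, and then checks the stationarity condition $s'(1/2)=0$; strict concavity plus a critical point at $t=1/2$ gives the maximum. You instead extract the exact identity $\abs{\boldsymbol{a}_t}^2+\abs{\boldsymbol{b}_t}^2=2a_*^2+2\pa{t-\tfrac12}^2\beta$ with $\beta=\abs{\boldsymbol{r}_1-\boldsymbol{r}_2}^2-\abs{\boldsymbol{r}_1\times\boldsymbol{r}_2}^2\geqslant 0$, and push it through the entropy using Jensen's inequality for the concave, decreasing function $g(u)=\Phi(\sqrt{u})$. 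I checked the identity, the nonnegativity of $\beta$, and the concavity of $g$; all are correct. Your route trades the paper's second-derivative bookkeeping for one structural identity plus a one-variable concavity lemma, and it makes transparent both why $t=1/2$ is optimal (the sum of squared Bloch norms is a parabola in $t$ centered there) and when the maximum can be attained elsewhere (only if $\beta=0$ and the Jensen step is tight). The paper's calculus approach, in exchange, delivers strict concavity of $s(t)$ and hence uniqueness of the maximizer in the generic case.

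One slip to fix: you state that $g''\leqslant 0$ ``reduces to $\ln\frac{1+r}{1-r}\geqslant\frac{2r}{1-r^2}$,'' but that inequality is false on $(0,1)$ and is the reverse of what you need. The correct reduction is $\ln\frac{1+r}{1-r}\leqslant\frac{2r}{1-r^2}$, equivalently $r\Phi''(r)-\Phi'(r)\leqslant 0$, and this is exactly what your $h$-function computation proves: $h(0)=0$ and $h'(r)=4r^2/(1-r^2)^2\geqslant 0$ give $h(r)\geqslant 0$, i.e. $\frac{2r}{1-r^2}\geqslant\ln\frac{1+r}{1-r}$. So the displayed direction is a typo; the verification, and with it the concavity of $g$ and the whole proof, stand.
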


\begin{proof}
Denote $s(t)=\rS(\rho_1\boxplus_t \rho_2)+\rS(\rho_2\boxplus_t
\rho_1)$. Thus $s(t)=\Phi(r_{12}(t))+\Phi(r_{21}(t))$, where
$$
\Phi(x) = -\frac{1+x}2\log_2\frac{1+x}2 -
\frac{1-x}2\log_2\frac{1-x}2,
$$
and $r_{12}(t)=\abs{\boldsymbol{r}(\rho_1\boxplus_t \rho_2)},
r_{21}(t) = \abs{\boldsymbol{r}(\rho_2\boxplus_t \rho_1)}$. Then
$$
s'(t) = \frac12 r'_{12}(t)\log_2\frac{1-r_{12}(t)}{1+r_{12}(t)}
+\frac12 r'_{21}(t)\log_2\frac{1- r_{21}(t)}{1+ r_{21}(t)}
$$
and
\begin{eqnarray*}
s''(t) &=& \frac12 r''_{12}(t)\log_2\frac{1-r_{12}(t)}{1+r_{12}(t)}
+\frac12 r''_{21}(t)\log_2\frac{1-r_{21}(t)}{1+r_{21}(t)}\\
&&-\frac1{\ln2}\Pa{\frac{[r'_{12}(t)]^2}{1-r_{12}(t)^2} +
\frac{[r'_{21}(t)]^2}{1-r_{21}(t)^2}}.
\end{eqnarray*}
Note that $0\leqslant r_{12}(t)\leqslant tr_1+(1-t)r_2\leqslant 1$
and $0\leqslant r_{21}(t)\leqslant (1-t)r_1+tr_2\leqslant 1$ because
$r_1,r_2\in[0,1]$. Thus
$$
-\frac1{\ln2}\Pa{\frac{[r'_{12}(t)]^2}{1-r_{12}(t)^2} +
\frac{[r'_{21}(t)]^2}{1-r_{21}(t)^2}}<0.
$$
Denote $\alpha:=2r_1r_2\cos\theta+r^2_1r^2_2\sin^2\theta$ and
$\varphi(t)=(r^2_1+r^2_2-\alpha)t^2+\alpha t$. Hence
$$
r_{12}(t)=\sqrt{\varphi(t)-2r^2_2t+r^2_2},\quad
r_{21}(t)=\sqrt{\varphi(t)-2r^2_1t+r^2_1}.
$$
This implies that $r'_{12}(t) =
\frac{\varphi'(t)-2r^2_2}{2r_{12}(t)}$. Based on this, we obtain
\begin{eqnarray*}
r''_{12}(t) = \frac{2\varphi''(t)r^2_{12}(t) -
\Pa{\varphi'(t)-2r^2_2}^2}{4r^3_{12}(t)}.
\end{eqnarray*}
By using $\varphi'(t),\varphi''(t)$, and $r^2_{12}(t)$, we have
\begin{eqnarray*}
2\varphi''(t)r^2_{12}(t) - \Pa{\varphi'(t)-2r^2_2}^2
&=& 4r^2_2(r^2_1+r^2_2-\alpha) - (\alpha-2r^2_2)^2\\
&=&r^2_1r^2_2\Pa{4-(2\cos\theta+r_1r_2\sin^2\theta)^2}\geqslant0,
\end{eqnarray*}
implying $r''_{12}(t)> 0$. Similarly, we see that $r''_{21}(t)
> 0$. In summary, we get that $s''(t) <0$.
That is, $s(t)$ is the strict concave function over $[0,1]$. It is
easily seen that $r_{12}(1/2)=r_{21}(1/2)=\sqrt{\varphi(1/2)}$. Now
$r'_{12}(1/2)=\frac{\varphi'(1/2)-2r^2_2}{2r_{12}(1/2)}$ and
$r'_{21}(1/2)=\frac{\varphi'(1/2)-2r^2_1}{2r_{12}(1/2)}$. Apparently
$\varphi'(1/2)=r^2_1+r^2_2$ by the expression of $\varphi$.
Substituting $r_{12}(1/2),r'_{12}(1/2),\varphi'(1/2)$ into the
expression of $s'(1/2)$ gives rise to the result: $s'(1/2)=0$.
Therefore the maximum of $s(t)$ on $[0,1]$ is taken at $1/2$, i.e.,
$\max_{t\in[0,1]}s(t)=s(1/2)$. This is equivalent to the desired
conclusion $\max_{t\in[0,1]}g_2(t)=g_2(1/2)$. We are done.
\end{proof}

The result in Proposition~\ref{prop:CP-PSC} can be viewed as another
proof of the following inequality:
$$
\rS(\rho_1\boxplus_{1/2}\rho_2)\geqslant
\frac12\rS(\rho_1)+\frac12\rS(\rho_2).
$$
We also have the following interesting result:
\begin{prop}\label{prop:EI}
In a two-level system, it holds that
\begin{eqnarray}
w\rS(\rho_1)+(1-w)\rS(\rho_2)&\leqslant&\rS(\rho_1\boxplus_w\rho_2)\label{eq:a1}\\
&\leqslant& \rS\Pa{w\rho_1+(1-w)\rho_2}\label{eq:a2}
\end{eqnarray}
for any weight $w\in[0,1]$. In particular, for $w=1/2$, we have
\begin{eqnarray}\label{eq:1/2}
\frac{\rS(\rho_1)+\rS(\rho_2)}2\leqslant\rS(\rho_1\boxplus_{1/2}\rho_2)
\leqslant \rS\Pa{\frac{\rho_1+\rho_2}2}.
\end{eqnarray}
\end{prop}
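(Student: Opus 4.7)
The plan is to reduce everything to the Bloch vector lengths of the three relevant qubit states and then exploit monotonicity and concavity of the single-qubit entropy function $\Phi(r)=\rH_2((1-r)/2)$.

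First I would compute the Bloch vector of $\rho_1\boxplus_w\rho_2$ explicitly. Using the identity $[\,\mathbf{r}_1\cdot\boldsymbol\sigma,\mathbf{r}_2\cdot\boldsymbol\sigma\,]=2\mathrm{i}(\mathbf{r}_1\times\mathbf{r}_2)\cdot\boldsymbol\sigma$ one finds $[\rho_1,\rho_2]=\tfrac{\mathrm{i}}{2}(\mathbf{r}_1\times\mathbf{r}_2)\cdot\boldsymbol\sigma$, so from the formula $\cE_w(\rho_1\otimes\rho_2)=w\rho_1+(1-w)\rho_2-\mathrm{i}\sqrt{w(1-w)}[\rho_1,\rho_2]$ already stated in the text, the Bloch vector of $\rho_1\boxplus_w\rho_2$ is
\begin{eqnarray*}
\mathbf{r}_{12}(w)=w\mathbf{r}_1+(1-w)\mathbf{r}_2+\sqrt{w(1-w)}\,(\mathbf{r}_1\times\mathbf{r}_2),
\end{eqnarray*}
while $w\rho_1+(1-w)\rho_2$ has Bloch vector $\mathbf{r}_{\mathrm{mix}}(w)=w\mathbf{r}_1+(1-w)\mathbf{r}_2$. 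Since $\mathbf{r}_1\times\mathbf{r}_2$ is orthogonal to both $\mathbf{r}_1$ and $\mathbf{r}_2$, squaring both gives
\begin{eqnarray*}
r_{12}(w)^2=r_{\mathrm{mix}}(w)^2+w(1-w)\,r_1^2r_2^2\sin^2\theta,
\end{eqnarray*}
with $\theta$ the angle between $\mathbf{r}_1$ and $\mathbf{r}_2$. So $r_{\mathrm{mix}}(w)\leqslant r_{12}(w)$, and combined with the fact that $\Phi$ is a decreasing function of $r\in[0,1]$ (because $\Phi'(r)=-\tfrac12\log_2\tfrac{1+r}{1-r}\leqslant 0$) this immediately yields \eqref{eq:a2}.

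For \eqref{eq:a1} I would further compare $r_{12}(w)$ with the convex combination $wr_1+(1-w)r_2$. A direct calculation gives
\begin{eqnarray*}
r_{12}(w)^2-\bigl(wr_1+(1-w)r_2\bigr)^2
=w(1-w)(1-\cos\theta)\,r_1r_2\,\bigl[r_1r_2(1+\cos\theta)-2\bigr],
\end{eqnarray*}
and the bracket is nonpositive since $r_1r_2\leqslant 1$ and $1+\cos\theta\leqslant 2$. Hence $r_{12}(w)\leqslant wr_1+(1-w)r_2$. Using monotonicity of $\Phi$ in the first step and concavity of $\Phi$ (which follows from concavity of $\rH_2$) in the second step,
\begin{eqnarray*}
\rS(\rho_1\boxplus_w\rho_2)=\Phi(r_{12}(w))\geqslant \Phi\bigl(wr_1+(1-w)r_2\bigr)\geqslant w\Phi(r_1)+(1-w)\Phi(r_2),
\end{eqnarray*}
which is exactly \eqref{eq:a1}. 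The specialisation to $w=1/2$ gives \eqref{eq:1/2}.

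The only conceptually delicate step is the algebraic comparison leading to $r_{12}(w)\leqslant wr_1+(1-w)r_2$: it is not obvious a priori, since adding the cross-product term enlarges the Bloch length over $r_{\mathrm{mix}}(w)$, and one has to verify that this enlargement is still controlled by the triangle-inequality-type bound $wr_1+(1-w)r_2$. The factorisation above makes the sign transparent and is the real engine of the proof; the rest is monotonicity and concavity of $\Phi$, which are completely standard.
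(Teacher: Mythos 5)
Your proof is correct, and it differs from the paper's in one substantive respect. For the second inequality \eqref{eq:a2} you argue exactly as the paper does: the Bloch vector of $\rho_1\boxplus_w\rho_2$ is $w\boldsymbol{r}_1+(1-w)\boldsymbol{r}_2+\sqrt{w(1-w)}\,\boldsymbol{r}_1\times\boldsymbol{r}_2$, the cross term is orthogonal to the mixture's Bloch vector, so $\hat r(w)\geqslant r(w)$, and the decrease of $\Phi$ finishes the job. For the first inequality \eqref{eq:a1}, however, the paper simply cites Audenaert--Datta--Ozols (whose result holds for qudits, not just qubits), whereas you supply a self-contained elementary proof in the qubit case: the factorisation
\begin{eqnarray*}
\hat r(w)^2-\bigl(wr_1+(1-w)r_2\bigr)^2
= w(1-w)(1-\cos\theta)\,r_1r_2\,\bigl[r_1r_2(1+\cos\theta)-2\bigr]\leqslant 0
\end{eqnarray*}
is correct (I verified it expands consistently with the paper's expression for $\hat r(w)^2$), so $\hat r(w)\leqslant wr_1+(1-w)r_2$, and then monotonicity plus concavity of $\Phi$ (indeed $\Phi''(r)=-\frac{1}{\ln 2}\cdot\frac{1}{1-r^2}<0$) give $\rS(\rho_1\boxplus_w\rho_2)\geqslant w\rS(\rho_1)+(1-w)\rS(\rho_2)$. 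What your route buys is a fully explicit two-line qubit proof of \eqref{eq:a1} that also incidentally sharpens the picture: the Bloch length of $\rho_1\boxplus_w\rho_2$ is sandwiched between $r(w)$ and $wr_1+(1-w)r_2$, which simultaneously yields both inequalities of the proposition from a single monotonicity argument (and gives an alternative to the calculus argument of Proposition~\ref{prop:CP-PSC} for the $w=1/2$ case). What it does not give is the qudit generality of the cited result, but since the proposition is stated only for two-level systems, your argument fully proves the statement.
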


\begin{proof}
The inequality in \eqref{eq:a1} is obtained in \cite{Audenaert2016}.
In order to prove the second one in \eqref{eq:a2}, we use Bloch
representation of a density matrix for a qubit state, for
$w\in[0,1]$,
\begin{eqnarray*}
\rho(\boldsymbol{r}(w))&:=&w\rho(\boldsymbol{r}_1)+(1-w)\rho(\boldsymbol{r}_2),\\
\rho(\hat{\boldsymbol{r}}(w))&:=&\rho(\boldsymbol{r}_1)\boxplus_w\rho(\boldsymbol{r}_2).
\end{eqnarray*}
We see that
\begin{eqnarray*}
\boldsymbol{r}(w) &=& w\boldsymbol{r}_1+(1-w)\boldsymbol{r}_2,\\
\hat{\boldsymbol{r}}(w) &=& w\boldsymbol{r}_1+(1-w)\boldsymbol{r}_2
+\sqrt{w(1-w)}\boldsymbol{r}_1\times\boldsymbol{r}_2.
\end{eqnarray*}
Let $\theta\in[0,\pi]$ is the angle between vectors
$\boldsymbol{r}_1$ and $\boldsymbol{r}_2$. Thus
\begin{eqnarray*}
r(w)^2 &=& w^2r^2_1+(1-w)^2r^2_2+2w(1-w)r_1r_2\cos\theta, \\
\hat r(w)^2 &=& w^2r^2_1+(1-w)^2r^2_2+2w(1-w)r_1r_2\cos\theta
+w(1-w)r^2_1r^2_2\sin^2\theta.
\end{eqnarray*}
Since
\begin{eqnarray*}
\Phi(r(w))&=&\rS\Pa{w\rho(\boldsymbol{r}_1)+(1-w)\rho(\boldsymbol{r}_2)},\\
\Phi(\hat r(w))&=&
\rS(\rho(\boldsymbol{r}_1)\boxplus_w\rho(\boldsymbol{r}_2)).
\end{eqnarray*}
Clearly $1\geqslant\hat r(w)\geqslant r(w)\geqslant0$. Note that
$\Phi(x)$ is decreasing over $[0,1]$ since
$$
\Phi'(x)=\frac12\Br{\log_2(1-x)-\log_2(1+x)}\leqslant0,
$$
it follows that $\Phi(\hat r(w)) \leqslant \Phi(r(w))$ for any
weight $w\in[0,1]$. We get the desired second inequality.
Consequently, for $w=1/2$, thus we get \eqref{eq:1/2}.
\end{proof}

We believe that Proposition~\ref{prop:EI} can be generalized to a
qudit system. In fact, \eqref{eq:a1} holds for a qudit system. We
\emph{conjecture} that \eqref{eq:a2} holds also for a qudit system,
i.e.,
\begin{eqnarray}
\rS(\rho_1\boxplus_w\rho_2)\leqslant
\rS\Pa{w\rho_1+(1-w)\rho_2},\quad \forall w\in[0,1].
\end{eqnarray}

%------------------------------------------------------%
\subsection{A lower bound for the quantum Jensen-Shannon divergence}
%------------------------------------------------------%

Recently, Majtey \emph{et al} \cite{Majtey2005} introduced a quantum
analog of Jensen-Shannon divergence, called \emph{quantum
Jensen-Shannon divergence (QJSD)}, as a measure of
distinguishability between mixed quantum states. Since QJSD shares
most of the physically relevant properties with the relative
entropy, it is considered to be the ``good" quantum
distinguishability measure.

Recall that the quantum Jensen-Shannon divergence (QJSD) is defined
by
\begin{eqnarray}
J(\rho_1,\rho_2):= \rS\Pa{\frac{\rho_1 +\rho_2}2}-
\frac{\rS(\rho_1)+\rS(\rho_2)}2.
\end{eqnarray}
Lamberti \emph{et al} \cite{Lamberti2008} discussed the metric
character of QJSD. They proposed a \emph{conjecture} related to the
quantum Jensen-Shannon divergence: the following distance, based on
QJSD, is the true metric
\begin{eqnarray}
\bD_{J}(\rho_1,\rho_2) =\sqrt{J(\rho_1,\rho_2)}.
\end{eqnarray}
Note that this conjecture is proven to be true for qubit systems and
pure qudit systems \cite{Briet2009}. Numerical evidence supports it
for mixed qudit systems.

By employing the quantum addition rule, we can provide a lower bound
for the QJSD. Denote
\begin{eqnarray}
\hat{J}(\rho_1,\rho_2):= \rS(\rho_1\boxplus_{1/2} \rho_2)-
\frac{\rS(\rho_1)+\rS(\rho_2)}2.
\end{eqnarray}
With the above notation, we define a new distance:
\begin{eqnarray}
\bD_{\hat{J}}(\rho_1,\rho_2) =\sqrt{\hat{J}(\rho_1,\rho_2)}.
\end{eqnarray}
Note that $\rS\Pa{\frac{\rho_1
+\rho_2}2}\geqslant\rS(\rho_1\boxplus_{1/2} \rho_2)$, we see that
$$
J(\rho_1,\rho_2)\geqslant\hat J(\rho_1,\rho_2)
$$
and thus
$$
\bD_{J}(\rho_1,\rho_2)\geqslant \bD_{\hat J}(\rho_1,\rho_2).
$$
Similarly, we can study the lower bound based on the quantum
addition rule for two-level systems in
Definition~\ref{def:additionrule}. We expect the quantity $\bD_{\hat
J}$ to be the true metric. But in fact it is not, as suggested in
the following figures. Clearly,
\begin{eqnarray}
\bD_{\hat J}(\rho(\boldsymbol{r}_1),\rho(\boldsymbol{r}_2)) =
\sqrt{\Phi(\hat r_{12}) - \frac{\Phi(r_1)+\Phi(r_2)}2},
\end{eqnarray}
where $\hat
r_{12}=\frac12\sqrt{r^2_1+r^2_2+2r_1r_2\cos\theta+r^2_1r^2_2\sin^2\theta}$
for $r_1,r_2\in(0,1)$ and $\theta\in[0,\pi]$.

Denote
\begin{eqnarray}
\boldsymbol{\Delta}(\boldsymbol{r}_1,\boldsymbol{r}_2,\boldsymbol{r}_3)
:=\bD_{\hat
J}(\rho(\boldsymbol{r}_1),\rho(\boldsymbol{r}_2))+\bD_{\hat
J}(\rho(\boldsymbol{r}_1),\rho(\boldsymbol{r}_3)) -\bD_{\hat
J}(\rho(\boldsymbol{r}_2),\rho(\boldsymbol{r}_3)).
\end{eqnarray}
Our numerical experiments, as demonstrated in the figures, i.e.,
Fig.~\ref{fig:Fig1} and Fig.~\ref{fig:Fig2}, show that $\bD_{\hat
J}(\rho(\boldsymbol{r}_1),\rho(\boldsymbol{r}_2))$ does not satisfy
the triangle inequality by randomly generating the thousands of
qubits. That is, there exist some random samples such that
$\boldsymbol{\Delta}<0$.

In FIG.~\ref{fig:Fig1}, there are some states which do not satisfy
the triangle inequality, for example
\begin{eqnarray*}
\boldsymbol{r}_1&=&[0.594637,-0.562167,
-0.402354]^\t,\\
\boldsymbol{r}_2&=&[0.246183,-0.755573,
0.593725]^\t,\\
\boldsymbol{r}_3&=&[0.190508,-0.0792096, -0.855743]^\t,
\end{eqnarray*}
we have $\boldsymbol{\Delta}=-0.0820814<0$ for the above triple.
This means that $\bD_{\hat J}$ is not the true metric over
$\density{\complex^2}$.

\begin{figure}[ht]
\centering {\begin{minipage}[b]{0.8\linewidth}
\includegraphics[width=1\textwidth]{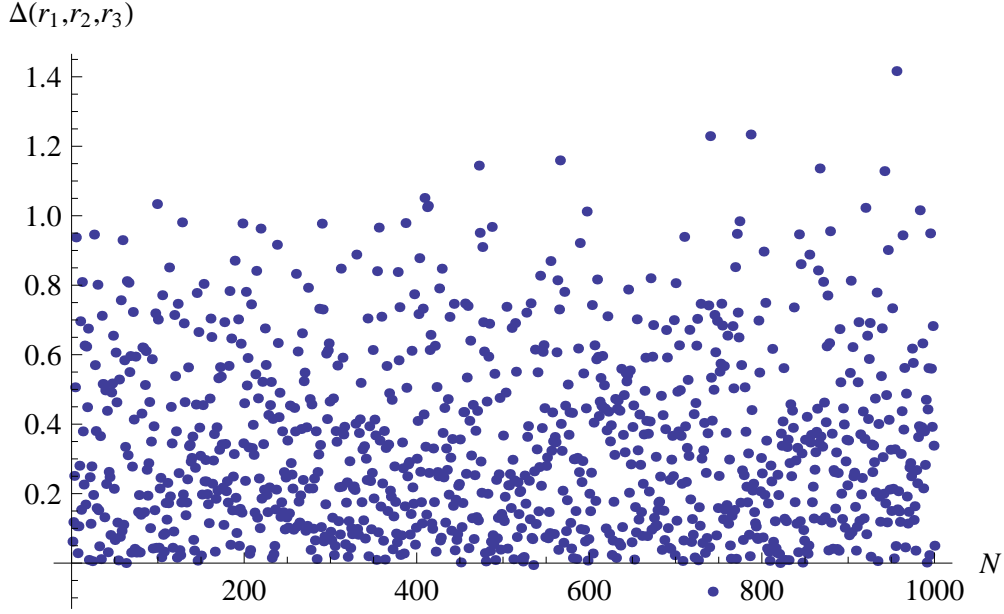}
\end{minipage}}
\caption{(Color online) Random testing of the violation of the
triangular inequality for $\bD_{\hat J}$ by using \emph{mixed} qubit
states. The horizontal axis $N$ represents the numbers of
$\boldsymbol{\Delta}(\boldsymbol{r}_1,\boldsymbol{r}_2,\boldsymbol{r}_3)$
which are calculated here.
$\rho(\boldsymbol{r}_1),\rho(\boldsymbol{r}_2)$ and
$\rho(\boldsymbol{r}_3)$ are generally not pure states. The points
under the horizontal axis indicate the violation of the triangular
inequality by triples of three qubit states.} \label{fig:Fig1}
\end{figure}

\begin{figure}[ht]
\centering {\begin{minipage}[b]{0.8\linewidth}
\includegraphics[width=1\textwidth]{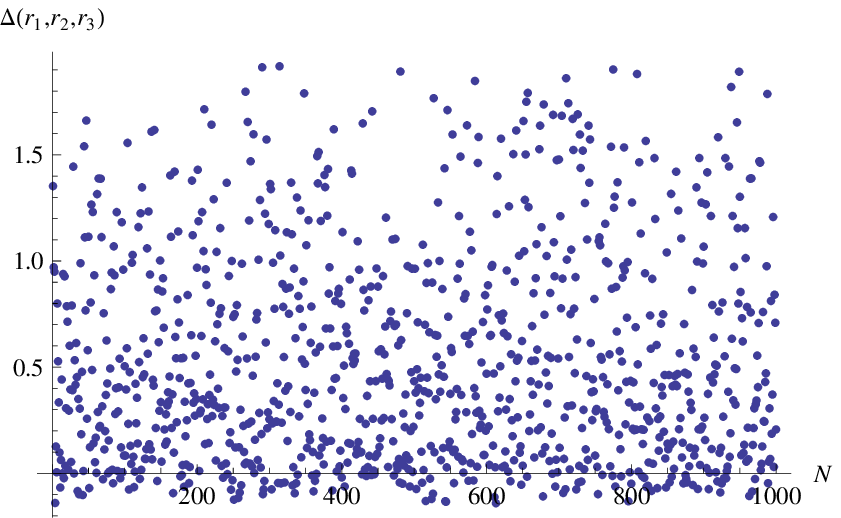}
\end{minipage}}
\caption{(Color online) Random testing of the violation of the
triangular inequality for $\bD_{\hat J}$ by using \emph{pure} qubit
states. The horizontal axis $N$ represents the numbers of
$\boldsymbol{\Delta}(\boldsymbol{r}_1,\boldsymbol{r}_2,\boldsymbol{r}_3)$
which are calculated here. All
$\rho(\boldsymbol{r}_1),\rho(\boldsymbol{r}_2)$ and
$\rho(\boldsymbol{r}_3)$ are \emph{pure} states). The points under
the horizontal axis indicate the violation of the triangular
inequality by triples of three qubit \emph{pure} states.}
\label{fig:Fig2}
\end{figure}

In FIG.~\ref{fig:Fig2}, as suggested in this figure, the inequality
$\boldsymbol{\Delta}\geqslant0$, i.e., the triangle inequality, is
violated by a lot of random samples representing pure qubit states.
Again, $\bD_{\hat J}$ is not the true metric over the set of all
pure qubit states.

\begin{figure}[ht]
\centering {\begin{minipage}[b]{0.8\linewidth}
\includegraphics[width=1\textwidth]{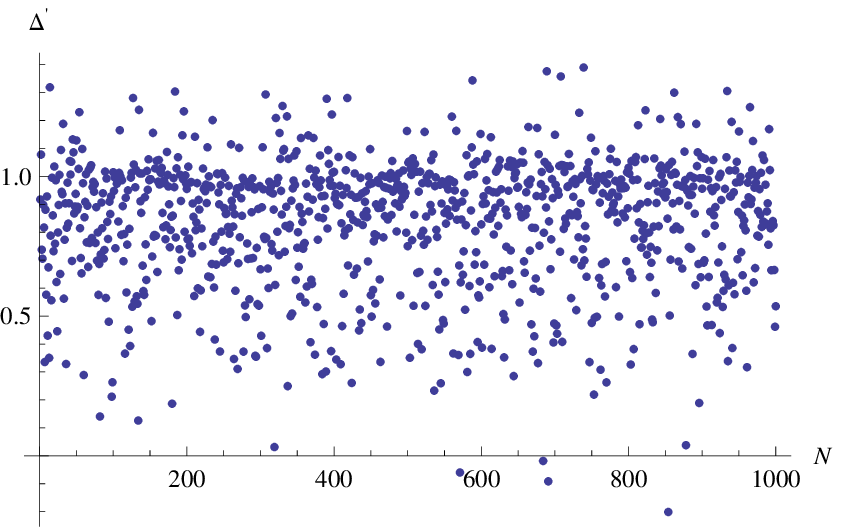}
\end{minipage}}
\caption{(Color online) Random testing of the violation of the
triangular inequality for $\bD^2_{\hat J}$ by using \emph{mixed}
qubit states. The horizontal axis $N$ represents the numbers of
$\boldsymbol{\Delta}'$ which are calculated here.
$\rho_1=\rho(\boldsymbol{r}_1),\rho_2=\rho(\boldsymbol{r}_2)$ and
$\rho=\rho(\boldsymbol{r}_3)$ are generally not pure states. The
points under the horizontal axis indicate the violation of the
triangular inequality by triples of three qubit states.}
\label{fig:Fig3}
\end{figure}

\begin{figure}[ht]
\centering {\begin{minipage}[b]{0.8\linewidth}
\includegraphics[width=1\textwidth]{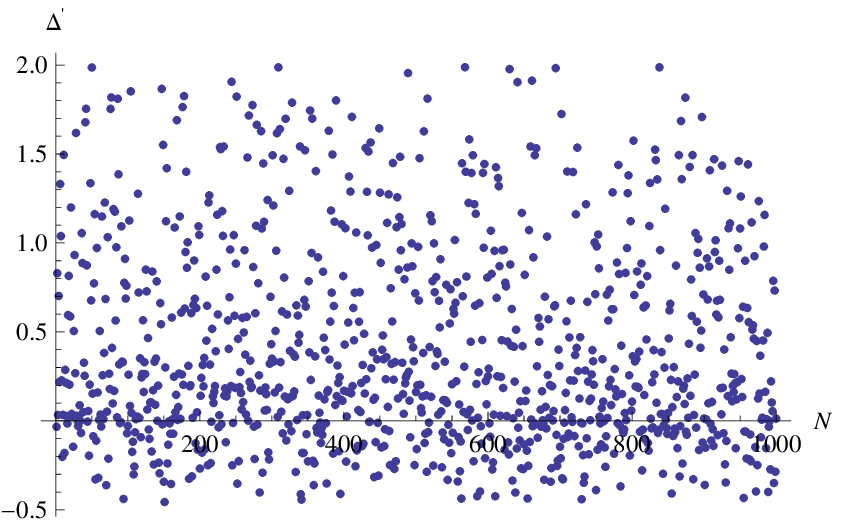}
\end{minipage}}
\caption{(Color online) Random testing of the violation of the
triangular inequality for $\bD^2_{\hat J}$ by using \emph{pure}
qubit states. The horizontal axis $N$ represents the numbers of
$\boldsymbol{\Delta}'$ which are calculated here. All
$\rho_1=\rho(\boldsymbol{r}_1),\rho_2=\rho(\boldsymbol{r}_2)$ and
$\rho=\rho(\boldsymbol{r}_3)$ are \emph{pure} states). The points
under the horizontal axis indicate the violation of the triangular
inequality by triples of three qubit \emph{pure} states.}
\label{fig:Fig4}
\end{figure}

Denote
\begin{eqnarray*}
\boldsymbol{\Delta}'&=&\bD^2_{\hat
J}(\rho(\boldsymbol{r}_1),\rho(\boldsymbol{r}_2))+\bD^2_{\hat
J}(\rho(\boldsymbol{r}_1),\rho(\boldsymbol{r}_3))-\bD^2_{\hat
J}(\rho(\boldsymbol{r}_2),\rho(\boldsymbol{r}_3)).
\end{eqnarray*}
In FIG.~\ref{fig:Fig3} and FIG.~\ref{fig:Fig4}, we demonstrate the
fact that $\bD^2_{\hat J}$ is still not the true metric when
restricted to the set $\density{\complex^2}$ and the set of all pure
qubit states, respectively.

%------------------------------------------------------%
\subsection{The spectral density of mixture under the so-called quantum addition rule}
%------------------------------------------------------%

Assume that both $\rho_1$ and $\rho_2$ are i.i.d. random density
matrices for qubits, chosen from respective unitary orbits $\cO_\mu$
and $\cO_\nu$ with $\mu,\nu$ are fixed in the open interval
$(0,1/2)$. Denote
$\rho(\hat{\boldsymbol{r}})=\rho(\boldsymbol{r}_1)\boxplus_{1/2}\rho(\boldsymbol{r}_2)$.
Let $\hat r=\abs{\hat{\boldsymbol{r}}},r_1=\abs{\boldsymbol{r}_1}$
and $r_2=\abs{\boldsymbol{r}_2}$.

\begin{thrm}\label{th:qr-density}
The conditional probability density of $\hat r$ with respect to
fixed $r_1,r_2$ is given by
\begin{eqnarray}
q(\hat r|r_1,r_2)=\frac1{r_1r_2}\cdot\frac{2\hat
r}{\sqrt{(1+r^2_1)(1+r^2_2) -4\hat r^2}},
\end{eqnarray}
where $\hat r\in[r_-,r_+]$ with $r_-=\frac{\abs{r_1-r_2}}2$ and
$r_+=\frac{r_1+r_2}2$.
\end{thrm}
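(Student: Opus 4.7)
The plan is to reduce the problem to a single-variable change of variables from the angle $\theta$ between $\boldsymbol{r}_1$ and $\boldsymbol{r}_2$ to the length $\hat r$, using the density $f(\theta)=\tfrac12\sin\theta$ from \eqref{eq:theta-density}.

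First I would extract the Bloch vector of $\rho(\hat{\boldsymbol{r}})$. Substituting the identity $[\boldsymbol{r}_1\cdot\boldsymbol{\sigma},\,\boldsymbol{r}_2\cdot\boldsymbol{\sigma}] = 2\mathrm{i}(\boldsymbol{r}_1\times\boldsymbol{r}_2)\cdot\boldsymbol{\sigma}$ into $\cE_{1/2}(\rho_1\otimes\rho_2)=\tfrac12(\rho_1+\rho_2-\mathrm{i}[\rho_1,\rho_2])$ yields
$$\hat{\boldsymbol{r}} \;=\; \tfrac12\bigl(\boldsymbol{r}_1+\boldsymbol{r}_2+\boldsymbol{r}_1\times\boldsymbol{r}_2\bigr).$$
Using the orthogonality $(\boldsymbol{r}_1+\boldsymbol{r}_2)\perp(\boldsymbol{r}_1\times\boldsymbol{r}_2)$ one gets
$$4\hat r^{\,2}\;=\;r_1^2+r_2^2+2r_1r_2\cos\theta+r_1^2r_2^2\sin^2\theta,$$
which matches the formula already recorded just before the theorem statement; evaluating at $\theta=0$ and $\theta=\pi$ gives the endpoints $r_+$ and $r_-$.

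Next I would differentiate implicitly to obtain
$$8\hat r\,\tfrac{d\hat r}{d\theta}\;=\;-2r_1r_2\sin\theta\,(1-r_1r_2\cos\theta).$$
For $r_1,r_2\in(0,1)$ the factor $1-r_1r_2\cos\theta$ is strictly positive, so $\hat r$ is a strictly decreasing diffeomorphism of $(0,\pi)$ onto $(r_-,r_+)$ and the change-of-variables formula delivers
$$q(\hat r\mid r_1,r_2)\;=\;\frac{f(\theta)}{\lvert d\hat r/d\theta\rvert}\;=\;\frac{2\hat r}{r_1r_2\,(1-r_1r_2\cos\theta)}.$$

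To eliminate $\theta$ I would view the $\hat r$-identity as a quadratic in $u=\cos\theta$,
$$r_1^2r_2^2\,u^2-2r_1r_2\,u+\bigl(4\hat r^{\,2}-r_1^2-r_2^2-r_1^2r_2^2\bigr)=0,$$
whose roots satisfy $r_1r_2\,u=1\pm\sqrt{(1+r_1^2)(1+r_2^2)-4\hat r^{\,2}}$; only the minus branch is compatible with $|u|\le 1$ when $r_1r_2<1$, so $1-r_1r_2\cos\theta=\sqrt{(1+r_1^2)(1+r_2^2)-4\hat r^{\,2}}$, and inserting this into the expression for $q(\hat r\mid r_1,r_2)$ yields exactly the claimed formula. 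The main technical obstacle is the sign bookkeeping in this last step: one must pick the correct branch of the quadratic and verify that $1-r_1r_2\cos\theta$ does not vanish on the interior of $(0,\pi)$, so that no internal critical point spoils the monotonicity used in the change-of-variables formula.
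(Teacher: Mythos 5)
Your proof is correct and follows essentially the same route as the paper: a one-variable change of variables from the angle $\theta$ (equivalently, from $\cos\theta$, which the paper notes is uniform on $[-1,1]$) to $\hat r$ via the identity $\hat r=\tfrac12\sqrt{(1+r_1^2)(1+r_2^2)-(1-r_1r_2\cos\theta)^2}$. You merely carry out explicitly the Jacobian computation and the branch selection $1-r_1r_2\cos\theta=\sqrt{(1+r_1^2)(1+r_2^2)-4\hat r^{\,2}}$ that the paper leaves implicit.
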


\begin{proof}
As already noticed above,
$$
\hat r=
\frac12\sqrt{r^2_1+r^2_2+2r_1r_2\cos\theta+r^2_1r^2_2\sin^2\theta}.
$$
That is, $\hat
r=\frac12\sqrt{(1+r^2_1)(1+r^2_2)-(1-r_1r_2\cos\theta)^2}$. Since
the probability density function of $\theta$ is given by
$\frac12\sin\theta\dif\theta$, it follows that
$\cos\theta,\theta\in[0,\pi]$ has the constant density $\frac12,
\cos\theta\in[-1,1]$. By change of variables, we finally get the
desired density $q(\widetilde r|r_1,r_2)$.
\end{proof}

From this theorem, we can infer the spectral density of the mixture
of two random density matrices for qubits under the quantum addition
rule:
$\rho_1\boxplus_{1/2}\rho_2=\frac12(\rho_1+\rho_2-\mathrm{i}[\rho_1,\rho_2])$.
\begin{cor}\label{cor:qlambda-density}
The spectral density of $\rho_1\boxplus_{1/2}\rho_2$, where $\rho_1$
and $\rho_2$ chosen uniformly from respective unitary orbits
$\cO_\mu$ and $\cO_\nu$ with $\mu,\nu$ are fixed in the open
interval $(0,1/2)$, is given by
\begin{eqnarray*}
q(\hat \lambda|\mu,\nu)
=\frac1{2\Pa{\frac12-\mu}\Pa{\frac12-\nu}}\frac{\abs{\hat
\lambda-\frac12}}{\sqrt{(2\mu^2-2\mu+1)(2\nu^2-2\nu+1) -
\Pa{2\hat\lambda-1}^2}},
\end{eqnarray*}
where $\hat \lambda\in[T_0,T_1]\cup[1-T_1,1-T_0]$. Note that $T_0$
and $T_1$ can be found in Proposition~\ref{prop:lmn}.
\end{cor}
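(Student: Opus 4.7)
The strategy is to obtain the spectral density by a direct change of variables from Theorem~\ref{th:qr-density}, which gives the density of the Bloch-vector length $\hat r$. The qubit dictionary relates the two sets of parameters through $r_1 = 1 - 2\mu$, $r_2 = 1 - 2\nu$, and $\hat r = 2|\hat\lambda - \tfrac12|$, with each Bloch-vector length corresponding to the two eigenvalues $(1\pm\hat r)/2$.

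The first step is to substitute these identities into the density of Theorem~\ref{th:qr-density}. A short algebraic check gives
\begin{eqnarray*}
r_1 r_2 &=& 4\Pa{\tfrac12-\mu}\Pa{\tfrac12-\nu}, \\
1 + r_j^2 &=& 2\Pa{2\mu_j^2-2\mu_j+1} \quad\text{for $(r_j,\mu_j)\in\{(r_1,\mu),(r_2,\nu)\}$}, \\
(1+r_1^2)(1+r_2^2) - 4\hat r^2 &=& 4\Br{(2\mu^2-2\mu+1)(2\nu^2-2\nu+1) - (2\hat\lambda-1)^2},
\end{eqnarray*}
so that the radical pulls out a factor of $2$, the numerator $2\hat r$ becomes $4|\hat\lambda - \tfrac12|$, and the Jacobian $|\dif\hat r/\dif\hat\lambda| = 2$ is immediate.

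The second step is to account for the two-to-one character of the map $\hat r \mapsto \hat\lambda$: each $\hat r$ corresponds to both eigenvalues $\hat\lambda = (1 \pm \hat r)/2$. Following the convention of Proposition~\ref{prop:lmn}, in which $p(\lambda|\mu,\nu)$ is the density of a single uniformly-selected eigenvalue of the mixture, I would split the total mass evenly between the two branches, introducing an additional factor of $1/2$. Collecting all numerical prefactors then collapses precisely to the expression claimed in the corollary.

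Finally, the support must be propagated. The range $\hat r \in [r_-, r_+] = [|\mu-\nu|, 1-\mu-\nu]$ transforms under $\hat\lambda = (1\pm\hat r)/2$ into the two symmetric intervals $[T_0, T_1]$ and $[1-T_1, 1-T_0]$, which agrees with the claim (as already observed for the arithmetic mixture in Proposition~\ref{prop:lmn}). No serious obstacle is anticipated here; the proof is essentially bookkeeping of a change of variables, and the only subtle point is the factor of $1/2$ arising from the two-eigenvalue-per-Bloch-length correspondence, which must be tied to the same normalization convention already fixed in Proposition~\ref{prop:lmn}.
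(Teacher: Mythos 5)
Your proposal is correct and follows exactly the route the paper intends: the paper's proof of this corollary is simply the remark that it "easily follows from Theorem~\ref{th:qr-density}," and your change of variables $r_j = 1-2\mu_j$, $\hat r = 2|\hat\lambda - \tfrac12|$, with the Jacobian factor of $2$ and the factor of $\tfrac12$ from splitting the mass between the two eigenvalue branches, supplies precisely the omitted bookkeeping (and is consistent with the normalization convention already implicit in Proposition~\ref{prop:lmn}, under which the density integrates to one over the union $[T_0,T_1]\cup[1-T_1,1-T_0]$). The support computation $[\,|\mu-\nu|,\,1-\mu-\nu\,]\mapsto[T_0,T_1]\cup[1-T_1,1-T_0]$ is also right.
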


\begin{proof}
The proof easily follows from Theorem~\ref{th:qr-density}.
\end{proof}

Here we give the figures to show how $p(\lambda|\mu,\nu)$ in
Proposition~\ref{prop:lmn} and $q(\hat \lambda|\mu,\nu)$ in
Corolloary~\ref{cor:qlambda-density} change with $\lambda$ and
$\hat\lambda$ when $\mu,\nu$ chosen, respectively, in
FIG.~\ref{fig:Fig5}; $p(r|r_1,r_2)$ in Theorem~\ref{th:rr1r2} and
$q(\hat r|r_1,r_2)$ in Theorem~\ref{th:qr-density} with $r$ and
$\hat r$, respectively, in FIG.~\ref{fig:Fig6}.

\begin{figure}[ht]
\centering {\begin{minipage}[b]{0.8\linewidth}
\includegraphics[width=1\textwidth]{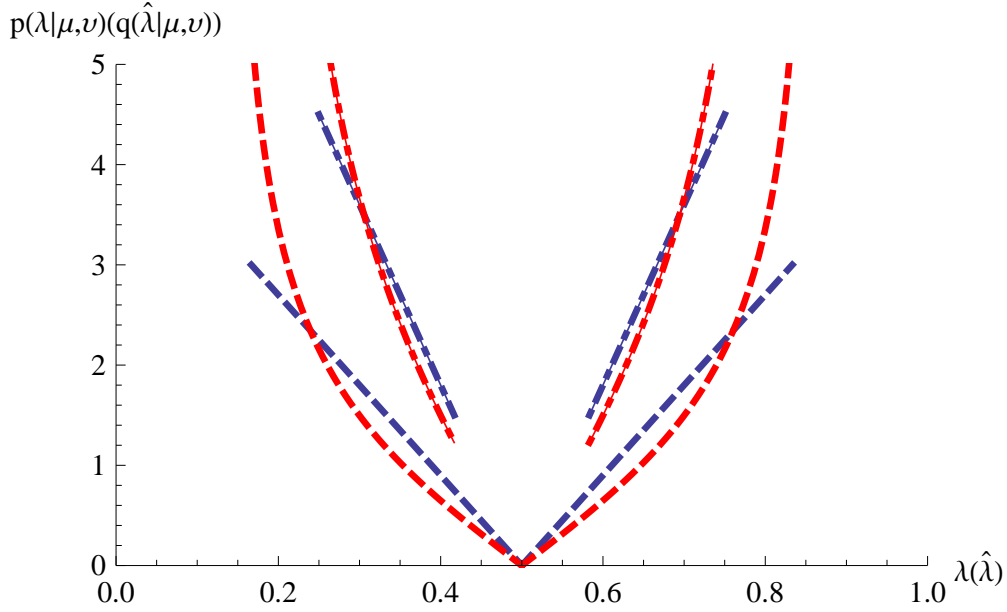}
\end{minipage}}
\caption{(Color online) The eigenvalue densities of two kinds of
mixtures of two qubit states. $p(\lambda|\mu,\nu)$ in
Proposition~\ref{prop:lmn} versus $\lambda$ and $q(\hat
\lambda|\mu,\nu)$ in Corollary~\ref{cor:qlambda-density} versus
$\hat\lambda$: $p(\lambda|\mu,\nu)$ is represented by the blue line
and $q(\hat\lambda|\mu,\nu)$ is represented by the red line. The
solid line ($\mu=\frac{1}{3}$,$\nu=\frac{1}{6}$), the dashed line
($\mu=\nu=\frac{1}{6}$) and the dot-dashed line
($\mu=\frac{1}{6}$,$\nu=\frac{1}{3}$), both the solid line and the
dot-dashed line are coincided.} \label{fig:Fig5}
\end{figure}

\begin{figure}[ht]
\centering  { \begin{minipage}[b]{0.8\linewidth}
\includegraphics[width=1\textwidth]{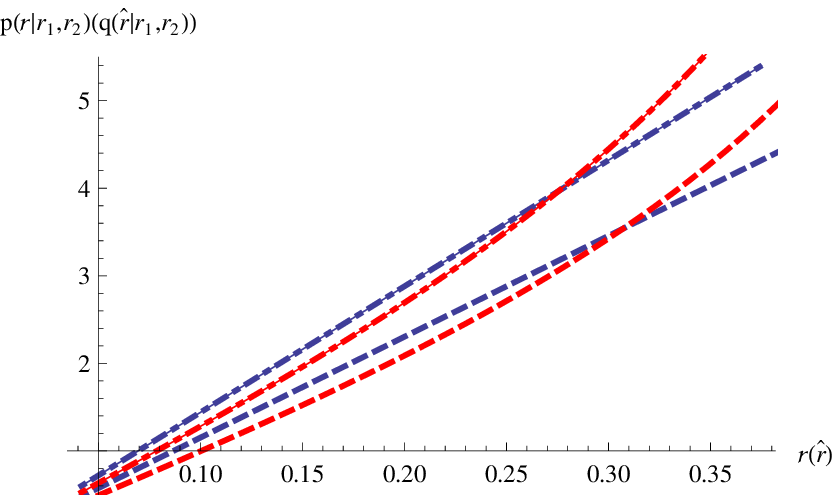}
\end{minipage}}
\caption{(Color online) The densities of the lengths of the Bloch
vectors for two kinds of mixtures of two qubit states.
$p(r|r_1,r_2)$ in Theorem~\ref{th:rr1r2} versus $r$ and $q(\hat
r|r_1,r_2)$ in Theorem~\ref{th:qr-density} versus $\hat r$:
$p(r|r_1,r_2)$ is represented by the blue line and $q(\hat
r|r_1,r_2)$ is represented by the red line. The solid line
($r_1=\frac13$,$r_2=\frac23$), the dashed line ($r_1=r_2=\frac23$)
and the dot-dashed line ($r_1=\frac23$,$r_2=\frac13$), both the
solid line and the dot-dashed line are coincided.} \label{fig:Fig6}
\end{figure}

%==========================================%
\section{Further observations}\label{sect:app}
%==========================================%

We have derived the spectral density of the equiprobable mixture of
two random density matrices for qubits, whose density function is
given by \eqref{eq:lambdamunu}. We can use this to calculate the
average entropy of the equiprobable mixture of two random density
matrices for qubits. Let $\mu,\nu\in[0,1/2]$. Choose
$\rho_1\in\cO_\mu$ and $\rho_2\in\cO_\nu$ according to the Haar
measure over the unitary group $\rS\rU(2)$. The average entropy of
the equiprobable mixture of two random density matrices is given by
\begin{eqnarray*}
\mathbb{E}_{(\rho_1,\rho_2)\in\cO_\mu\times
\cO_\nu}\rS\Pa{\frac{\rho_1+\rho_2}2}=
\iint\dif\mu_{\mathrm{Haar}}(U)\dif\mu_{\mathrm{Haar}}(V)
\rS\Pa{\frac{U\rho_1U^\dagger+V\rho_2V^\dagger}2}.
\end{eqnarray*}
By using the result in Theorem~\ref{th:rr1r2}, we get that
\begin{prop}
The average entropy of the equiprobable mixture of two random
density matrices chosen uniformly from orbits $\cO_\mu$ and
$\cO_\nu$, respectively, is given by the formula:
\begin{eqnarray}
\int^{r_+}_{r_-} \Phi(r)p(r|r_1,r_2)\dif r =
\frac2{r_1r_2}\int^{r_+}_{r_-} \Phi(r)r\dif r,
\end{eqnarray}
where $r_+:=\frac{r_1+r_2}2, r_-:=\frac{\abs{r_1-r_2}}2$ for
$r_1=\frac{1-\mu}2, r_2=\frac{1-\nu}2$, and $\Phi(x) =
-\frac{1+x}2\log_2\frac{1+x}2 - \frac{1-x}2\log_2\frac{1-x}2$.
\end{prop}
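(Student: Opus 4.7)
The proposition is essentially an immediate corollary of Theorem~\ref{th:rr1r2}, so my plan is to assemble three ingredients already established earlier in the paper and then compute.

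First, I would invoke the Bloch-vector formula for the von Neumann entropy of a qubit: for any qubit density matrix $\rho(\boldsymbol{r})$ with Bloch vector of length $r=|\boldsymbol{r}|$, the entropy depends only on $r$ and equals $\Phi(r)=\rH_2\!\left(\tfrac{1-r}{2}\right)$. Applying this to the equiprobable mixture $\rho(\boldsymbol{r})=\tfrac{\rho(\boldsymbol{r}_1)+\rho(\boldsymbol{r}_2)}{2}$ reduces the computation of $\rS\!\left(\tfrac{\rho_1+\rho_2}{2}\right)$ to computing $\Phi$ of the length $r$ of the averaged Bloch vector $\tfrac{\boldsymbol{r}_1+\boldsymbol{r}_2}{2}$.

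Second, I would peel off the two Haar integrals by iterated expectation. Once $\rho_1 \in \cO_\mu$ and $\rho_2\in\cO_\nu$ are drawn, the lengths $r_1,r_2$ of their Bloch vectors are determined (indeed $r_1=1-2\mu$ and $r_2=1-2\nu$), so the only remaining randomness is the relative orientation of $\boldsymbol{r}_1$ and $\boldsymbol{r}_2$. Theorem~\ref{th:rr1r2} supplies precisely the conditional law of $r$ given $r_1,r_2$, namely $p(r|r_1,r_2)=\tfrac{2r}{r_1r_2}$ supported on $[r_-,r_+]$. Therefore
\begin{equation*}
\mathbb{E}_{(\rho_1,\rho_2)\in\cO_\mu\times\cO_\nu}\rS\!\left(\tfrac{\rho_1+\rho_2}{2}\right)
=\mathbb{E}\bigl[\Phi(r)\bigr]=\int_{r_-}^{r_+}\Phi(r)\,p(r|r_1,r_2)\,\dif r.
\end{equation*}

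Third, substituting the explicit density $p(r|r_1,r_2)=\tfrac{2r}{r_1r_2}$ and pulling the constant $\tfrac{2}{r_1r_2}$ outside the integral yields the claimed identity
\begin{equation*}
\int_{r_-}^{r_+}\Phi(r)\,p(r|r_1,r_2)\,\dif r=\frac{2}{r_1r_2}\int_{r_-}^{r_+}\Phi(r)\,r\,\dif r.
\end{equation*}

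There is essentially no obstacle here: the whole content has been done in Theorem~\ref{th:rr1r2} and in the Bloch-representation formula for $\rS(\rho(\boldsymbol{r}))$. The only mild conceptual step is justifying the reduction from a double Haar integral over $\cO_\mu\times\cO_\nu$ to a single integral over $r$, which is immediate because $\Phi(r)$ is a function only of the length of the averaged Bloch vector and the distribution of that length given $(r_1,r_2)$ is exactly the content of Theorem~\ref{th:rr1r2}. The closed-form evaluation of $\int_{r_-}^{r_+}\Phi(r)\,r\,\dif r$ (which involves dilogarithms through integrating $r\log(1\pm r)$) is not required by the statement and would be left untouched.
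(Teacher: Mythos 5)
Your proposal is correct and follows exactly the route the paper intends: the proposition is stated as an immediate consequence of Theorem~\ref{th:rr1r2} together with $\rS(\rho(\boldsymbol{r}))=\Phi(r)$, and your conditioning argument on $(r_1,r_2)$ with the density $p(r|r_1,r_2)=\tfrac{2r}{r_1r_2}$ is precisely that reasoning spelled out. Note only that you correctly use $r_1=1-2\mu$, $r_2=1-2\nu$ (consistent with $\mu=\tfrac{1-r_1}{2}$ earlier in the paper), whereas the proposition's statement writes $r_1=\tfrac{1-\mu}{2}$, which appears to be a typo rather than a discrepancy in your argument.
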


Denote
$$
\phi(x):=x^2(1+6\ln2)+(1-x)^2(1+2x)\ln(1-x)+(1+x)^2(1-2x)\ln(1+x).
$$
The average entropy is reduced to the following:
$$
\frac{\phi(r_+)-\phi(r_-)}{6\ln 2 \cdot r_1r_2}.
$$
Furthermore, we can calculate the average entropy of the
equiprobable mixture of two random states distributed according to
the Hilbert-Schmidt measure over $\density{\complex^2}$.
Specifically, using the distribution densities of $r_1$ and $r_2$,
we have
\begin{eqnarray*}
\mathbb{E}_{\rho_1,\rho_2} \rS\Pa{\frac{\rho_1+\rho_2}2} =
18\int^1_0\int^1_0\dif r_1\dif r_2 \Br{r_1r_2\int^{r_+}_{r_-}
\Phi(r)r\dif r}\approx 0.7631.
\end{eqnarray*}

Similarly, we can calculate the average entropy of the mixture of
two random density matrices for qubits under the quantum addition
rule: $\rho_1\boxplus_{1/2}\rho_2$, where $\rho_1\in\cO_\mu$ and
$\rho_2\in\cO_\nu$. The average entropy of the mixture
$\rho_1\boxplus_{1/2}\rho_2$ is given by
\begin{eqnarray*}
\mathbb{E}_{(\rho_1,\rho_2)\in\cO_\mu\times
\cO_\nu}\rS\Pa{\rho_1\boxplus_{1/2}\rho_2}=
\iint\dif\mu_{\mathrm{Haar}}(U)\dif\mu_{\mathrm{Haar}}(V)
\rS\Pa{U\rho_1U^\dagger\boxplus_{1/2}V\rho_2V^\dagger}.
\end{eqnarray*}

By using the result in Theorem~\ref{th:qr-density}, we have that
\begin{prop}
According to the quantum addition rule, the average entropy of the
mixture of two random density matrices chosen from orbits $\cO_\mu$
and $\cO_\nu$, respectively, is given by the formula:
\begin{eqnarray}
\int^{r_+}_{r_-} \Phi(\hat r)q(\hat r|r_1,r_2)\dif \hat r =
\frac2{r_1r_2}\int^{r_+}_{r_-} \frac{\Phi(\hat r)\hat r\dif \hat
r}{\sqrt{(1+r^2_1)(1+r^2_2) -4\hat r^2}},
\end{eqnarray}
where $r_+:=\frac{r_1+r_2}2, r_-:=\frac{\abs{r_1-r_2}}2$ for
$r_1=\frac{1-\mu}2, r_2=\frac{1-\nu}2$, and $\Phi(x) =
-\frac{1+x}2\log_2\frac{1+x}2 - \frac{1-x}2\log_2\frac{1-x}2$.
\end{prop}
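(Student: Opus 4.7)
The plan is to reduce the double Haar integral defining the average entropy to a one-dimensional integral against the conditional density $q(\hat r|r_1,r_2)$ derived in Theorem~\ref{th:qr-density}, using only the fact that the von Neumann entropy of a qubit depends on its density matrix through the length of the Bloch vector.

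First, I would fix representatives $\rho_1 = U\diag(1-\mu,\mu)U^\dagger$ and $\rho_2 = V\diag(1-\nu,\nu)V^\dagger$ with $U,V\in\rS\rU(2)$, and note that as $U$ and $V$ vary independently with the Haar measure, the corresponding Bloch vectors $\boldsymbol{r}_1,\boldsymbol{r}_2$ have fixed lengths (determined by $\mu,\nu$) and directions distributed uniformly on the 2-sphere. Since $\rho_1\boxplus_{1/2}\rho_2$ is a qubit state with Bloch vector $\hat{\boldsymbol{r}}$, its entropy equals $\Phi(\hat r)$ with $\hat r = |\hat{\boldsymbol{r}}|$; in particular the integrand depends on $(U,V)$ only through $\hat r$.

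Second, I would push the product Haar measure forward along the map $(U,V)\mapsto \hat r$. By Theorem~\ref{th:qr-density}, the resulting one-dimensional distribution on $\hat r$ is precisely $q(\hat r|r_1,r_2)\dif \hat r$ supported on $[r_-,r_+]$. Consequently,
\[
\mathbb{E}_{(\rho_1,\rho_2)\in\cO_\mu\times\cO_\nu}\rS(\rho_1\boxplus_{1/2}\rho_2)
= \int_{r_-}^{r_+} \Phi(\hat r)\, q(\hat r|r_1,r_2)\,\dif \hat r.
\]
Substituting the explicit formula for $q(\hat r|r_1,r_2)$ and pulling the prefactor $\frac{2}{r_1r_2}$ outside the integral produces the claimed identity.

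There is essentially no obstacle: the nontrivial change of variables from the angle $\theta\in[0,\pi]$ (with density $\tfrac12\sin\theta$, cf.~\eqref{eq:theta-density}) to the radial coordinate $\hat r$ has already been performed in the proof of Theorem~\ref{th:qr-density}, so the present proposition is a direct corollary. The only point deserving explicit mention is the $\rS\rU(2)$-invariance of $\rS(\rho_1\boxplus_{1/2}\rho_2)$ seen as a function of $(r_1,r_2,\theta)$, which holds because conjugation by $U,V$ rotates $\boldsymbol{r}_1,\boldsymbol{r}_2$ jointly and preserves both $r_1,r_2$ and the relative angle $\theta$, hence preserves $\hat r$ and therefore $\Phi(\hat r)$.
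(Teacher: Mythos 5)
Your proposal is correct and follows essentially the same route as the paper: the paper likewise treats this proposition as an immediate consequence of Theorem~\ref{th:qr-density}, since $\rS(\rho_1\boxplus_{1/2}\rho_2)=\Phi(\hat r)$ depends on $(U,V)$ only through $\hat r$, whose law under the product Haar measure is $q(\hat r|r_1,r_2)\,\dif\hat r$. Your write-up merely makes explicit the pushforward and invariance arguments that the paper leaves implicit.
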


Analogously, we can calculate the average entropy of the mixture
under the quantum addition rule of two random states distributed
according to the Hilbert-Schmidt measure over
$\density{\complex^2}$. Specifically,
\begin{eqnarray*}
&&\mathbb{E}_{\rho_1,\rho_2}\rS(\rho_1\boxplus_{1/2}\rho_2)\\
&&= 18\int^1_0\int^1_0\dif r_1\dif r_2 \Br{r_1r_2\int^{r_+}_{r_-}
\frac{\Phi(\hat r)\hat r\dif \hat
r}{\sqrt{(1+r^2_1)(1+r^2_2) -4\hat r^2}}}\\
&&\approx 0.7152.
\end{eqnarray*}

Recall that if a quantum system of Hilbert space dimension $mn$ is
in a random pure bipartite state, the average entropy of a subsystem
of dimension $m \leqslant n$ should be given by the simple and
elegant formula $H_{mn}-H_n-\frac{m-1}{2n}$, where
$H_k:=\sum^k_{j=1}1/j$ is the $k$-th harmonic number. This is
so-called Page's average entropy formula \cite{Page1993} which is
useful as a way of understanding the information in black hole
radiation. Using Page's formula, we see that
\begin{eqnarray*}
\frac13<
\mathbb{E}_{\rho_1,\rho_2}\rS(\rho_1\boxplus_{1/2}\rho_2)<\mathbb{E}_{\rho_1,\rho_2}
\rS\Pa{\frac{\rho_1+\rho_2}2}< 1.
\end{eqnarray*}
This indicates that our numerical calculations for
$\mathbb{E}_{\rho_1,\rho_2}\rS(\rho_1\boxplus_{1/2}\rho_2)\approx0.7152$
and
$\mathbb{E}_{\rho_1,\rho_2}\rS\Pa{\frac{\rho_1+\rho_2}2}\approx0.7631$
are compatible with the above inequality. It is also reasonably to
\emph{conjecture} that there is a chain of \emph{strict}
inequalities:
\begin{eqnarray}\label{eq:conjecture}
\mathbb{E}_{\rho_1,\rho_2}
\rS\Pa{\frac{\rho_1+\rho_2}2}<\mathbb{E}_{\rho_1,\rho_2,\rho_3}
\rS\Pa{\frac{\rho_1+\rho_2+\rho_3}3}<\cdots
<\mathbb{E}_{\rho_1,\ldots,\rho_n}
\rS\Pa{\frac{\rho_1+\cdots+\rho_n}n}< 1.
\end{eqnarray}
Indeed, denote $\Gamma = \sum^n_{j=1}\rho_j$ where each $\rho_j$ is
randomly chosen from Hilbert-Schmidt ensemble in a generic qubit
system. Thus
\begin{eqnarray*}
\frac{\sum^n_{j=1}\rho_j}n=\frac\Gamma n =
\frac1n\Pa{\sum^n_{j=1}\frac{\Gamma-\rho_j}{n-1}}.
\end{eqnarray*}
Then by the concavity of von Neumann entropy, we see that
\begin{eqnarray*}
\rS\Pa{\frac{\sum^n_{j=1}\rho_j}n}\geqslant
\frac1n\sum^n_{j=1}\rS\Pa{\frac{\Gamma-\rho_j}{n-1}}.
\end{eqnarray*}
Because $\rho_1,\ldots,\rho_n$ are independent and identically
distribution (i.i.d.), we have that, for each $j=1,\ldots,n$
\begin{eqnarray*}
\mathbb{E}_{\rho_1,\ldots,\rho_n}\rS\Pa{\frac{\Gamma-\rho_j}{n-1}}
=\cdots =
\mathbb{E}_{\rho_1,\ldots,\rho_{n-1}}\rS\Pa{\frac{\sum^{n-1}_{j=1}\rho_j}{n-1}}.
\end{eqnarray*}
Therefore, we obtain that
\begin{eqnarray*}
\mathbb{E}_{\rho_1,\ldots,\rho_n}\rS\Pa{\frac{\sum^n_{j=1}\rho_j}n}\geqslant\mathbb{E}_{\rho_1,\ldots,\rho_{n-1}}\rS\Pa{\frac{\sum^{n-1}_{j=1}\rho_j}{n-1}}.
\end{eqnarray*}
The strict inequality needs to be determined. Now we can use this
result and results obtained in \cite{Lin2018} to explain that the
quantum coherence \cite{Baumgratz2014} decreases statistically as
the mixing times $n$ increasing in the equiprobable mixture of $n$
qubits. Recall that the quantum coherence can be quantified by many
ways \cite{Baumgratz2014}. Here we take the coherence measure
defined via the relative entropy, i.e., the so-called \emph{relative
entropy of coherence}. The mathematical definition of the relative
entropy of coherence can be given as
$C_r(\rho):=\rS(\rho^D)-\rS(\rho)$, where $\rho^D$ is the diagonal
part of the quantum state $\rho$ with respect to a prior fixed
orthonormal basis. Denote $\overline{C}^{(n)}_r$ the average
coherence of the equiprobable mixture of $n$ i.i.d. random quantum
states from the Hilbert-Schmidt ensemble. By deriving the spectral
density of the mixture of 3 qubits and some analytical computations,
we show that $\overline{C}^{(3)}_r<\overline{C}^{(2)}_r$
\cite{Lin2018}. Numerical experiments further show that for any
integer $n>3$,
$$
\overline{C}^{(n)}_r<\cdots<\overline{C}^{(3)}_r<\overline{C}^{(2)}_r.
$$
Thus, in the qubit case, we find that the quantum coherence
monotonously decreases statistically as the mixing times $n$.
Moreover, we believe that the quantum coherence approaches zero when
$n\to\infty$.

Finally, we remark here that results in the present paper can be
used to compute exactly the average squared fidelity \cite{KZ2005}.
The fidelity between two qubit density matrices $\rho$ and $\sigma$
is defined by
$\rF(\rho,\sigma):=\Tr{\sqrt{\sqrt{\rho}\sigma\sqrt{\rho}}}$, then
\begin{eqnarray}
\mathbb{E}_{\rho,\sigma}\rF^2(\rho,\sigma) =
\frac12\Pa{1+\Pa{3\pi/16}^2}.
\end{eqnarray}
Indeed, it is easily seen that, the squared fidelity for the qubit
case is given by
\begin{eqnarray*}
\rF^2(\rho,\sigma) = 2\sqrt{\det(\rho)\det(\sigma)}+\Tr{\rho\sigma}.
\end{eqnarray*}
By using Bloch representations of $\rho=\rho(\boldsymbol{u})$ and
$\sigma=\rho(\boldsymbol{v})$, we have that
\begin{eqnarray*}
\rF^2(\rho(\boldsymbol{u}),\rho(\boldsymbol{v})) =
\frac12\Br{1+\Inner{\boldsymbol{u}}{\boldsymbol{v}} +
\sqrt{(1-u^2)(1-v^2)}},
\end{eqnarray*}
where $\abs{\boldsymbol{u}}=u\in[0,1]$ and
$\abs{\boldsymbol{v}}=v\in[0,1]$. In the following, we calculate the
average squared fidelity: Indeed, for $\boldsymbol{u}$ and
$\boldsymbol{v}$, denote by $\theta\in[0,\pi]$ the angle between
$\boldsymbol{u}$ and $\boldsymbol{v}$, their joint distribution
density is given by
$$
p(\boldsymbol{u},\boldsymbol{v})=p(u,v,\theta) = \frac92
u^2v^2\sin\theta,\quad \theta\in[0,\pi].
$$
Thus direct calculation gives rise to the desired result.
\begin{eqnarray*}
\mathbb{E}_{\rho,\sigma}\rF^2(\rho,\sigma) &=&
\mathbb{E}_{\boldsymbol{u},\boldsymbol{v}}
\rF^2(\rho(\boldsymbol{u}),\rho(\boldsymbol{v}))\\
&=&
\int\rF^2(\rho(\boldsymbol{u}),\rho(\boldsymbol{v}))p(\boldsymbol{u},\boldsymbol{v})[\dif\boldsymbol{u}][\dif\boldsymbol{v}]\\
&=&\frac12\Pa{1+\Pa{3\pi/16}^2}.
\end{eqnarray*}

%=============================================================================%
\section{Concluding remarks}
%=============================================================================%

In this paper, we work out the spectral densities of two kinds of
mixtures, i.e., the equiprobable mixture and the mixture under the
quantum addition rule, of two-level random density matrices chosen
uniformly from the Haar-distributed unitary orbits, respectively.
Before our work in the present paper, researchers always focus on
the eigenvalue statistics for individual quantum state ensemble, and
used frequently free probabilistic tools to make asymptotic analysis
to get much information about some statistical quantities. Although
the spectral analysis of superposition of random pure states were
performed recently \cite{ZPNC2011,CFF2013}, the topic about the
spectral densities for the mixtures of random density matrices from
two quantum state ensembles is rarely touched upon previously.
Moreover, our methods in the present paper can be further used to
derive the spectral densities of two kinds of mixtures:
$w\rho_1+(1-w)\rho_2$ and $\rho_1\boxplus_w\rho_2$ for qubits, where
$w\in[0,1]$. Finally, we leave some open questions here: (i) Can
Proposition~\ref{prop:CP-PSC} be generalized to a general qudit
system? (ii) We conjecture that
$\rS(\rho_1\boxplus_w\rho_2)\leqslant \rS\Pa{w\rho_1+(1-w)\rho_2}$
for any $w\in[0,1]$ and $\rho_1,\rho_2\in\density{\complex^d}$. We
believe that our contribution in exact spectral analysis of the
mixtures of random states will spur more new developments of
applying RMT in quantum information theory.

%=============================================================================%
\subsubsection*{Acknowledgments}

LZ is truly grateful to the reviewer's comments and suggestions on
improving the manuscript. LZ acknowledges Huangjun Zhu for improving
the quality of our manuscript. LZ is supported by Natural Science
Foundation of Zhejiang Province (LY17A010027) and NSFC
(Nos.11301124, 61771174), and also supported by the
cross-disciplinary innovation team building project of Hangzhou
Dianzi University. Both JW and ZC are also supported by NSFC
(Nos.11401007,11571313).

%=============================================================================%

%=============================================================================%

%=============================================================================%
\end{document}